\documentclass[sigconf]{acmart}

\usepackage{graphicx}
\usepackage{amsmath,amsfonts}

\usepackage{amssymb}
\usepackage{mathtools}
\usepackage{bm}
\usepackage{subcaption}
\usepackage{booktabs}
\usepackage{multirow}
\usepackage{algorithm}
\usepackage{algpseudocode} 
\usepackage{tikz}
\usepackage{pgfplots}
\usepackage{xcolor}
\usepackage{placeins}
\usepackage{float}
\usepackage{enumerate}

\theoremstyle{plain}
\newtheorem{theorem}{Theorem}[section]
\newtheorem{proposition}[theorem]{Proposition}
\newtheorem{corollary}[theorem]{Corollary}
\theoremstyle{definition}
\newtheorem{definition}[theorem]{Definition}

\theoremstyle{remark}
\newtheorem{remark}[theorem]{Remark}
\newtheorem{lemma}[theorem]{Lemma}

\copyrightyear{2026}
\acmYear{2026}
\setcopyright{cc}
\setcctype{by-nc-nd}
\acmConference[HSCC '26]{29th ACM International Conference on Hybrid Systems: Computation and Control}{May 11--14, 2026}{Saint Malo, France}
\acmBooktitle{29th ACM International Conference on Hybrid Systems: Computation and Control (HSCC '26), May 11--14, 2026, Saint Malo, France}
\acmDOI{10.1145/3801146.3805668}
\acmISBN{979-8-4007-2566-1/2026/05}

\begin{document}

\title{Data-Driven Reachability Analysis Using \\Matrix Perturbation Theory}

\author{Peng Xie}
\authornote{These authors contributed equally to this work.}
\affiliation{%
  \department{Department of Computer Engineering}
  \institution{TUM School of Computation, Information and Technology, Technical University of Munich}
  \city{Heilbronn}
  \country{Germany}}
\email{p.xie@tum.de}

\author{Abdulla Fawzy}
\authornotemark[1] 
\affiliation{%
  \department{Department of Computer Engineering}
  \institution{TUM School of Computation, Information and Technology, Technical University of Munich}
  \city{Heilbronn}
  \country{Germany}}
\email{abdo1220032@gmail.com}

\author{Zhen Zhang}
\affiliation{%
  \department{Department of Computer Engineering}
  \institution{TUM School of Computation, Information and Technology, Technical University of Munich}
  \city{Heilbronn}
  \country{Germany}}
\email{zhenzhang.zhang@tum.de}

\author{Amr Alanwar}
\affiliation{%
  \department{Department of Computer Engineering}
  \institution{TUM School of Computation, Information and Technology, Technical University of Munich}
  \city{Heilbronn}
  \country{Germany}}
\email{alanwar@tum.de}

\begin{abstract}
We propose a matrix zonotope perturbation framework that leverages matrix perturbation theory to characterize how noise-induced distortions alter the dynamics within sets of models. The framework derives interpretable Cai--Zhang bounds for matrix zonotopes (MZs) and extends them to constrained matrix zonotopes (CMZs). Motivated by this analysis and the computational burden of CMZ-based reachable-set propagation, we introduce a coefficient-space approximation in which the constrained coefficient space of the CMZ is over-approximated by an unconstrained zonotope. Replacing CMZ--constrained-zonotope (CZ) products with unconstrained MZ--zonotope multiplication yields a simpler and more scalable reachable-set update. Experimental results demonstrate that the proposed method is substantially faster than the standard CMZ approach while producing reachable sets that are less conservative than those obtained with existing MZ-based methods, advancing practical, accurate, and real-time data-driven reachability analysis.
\end{abstract}

\maketitle

\section{Introduction}

Reachability analysis is a cornerstone for guaranteeing safety and performance in control systems. However, classical approaches typically assume an accurate model, which is often unavailable or costly to identify. Recent work has therefore turned to \emph{data-driven} reachability analysis, where instead of identifying a single model, a set of models is propagated directly from data. Matrix zonotopes (MZs) have emerged as a convenient representation in this context, enabling algorithms that compute over-approximations of reachable sets for unknown linear (and even Lipschitz) systems directly from noisy data with set-containment guarantees~\cite{alanwar2023data,alanwar2023robust}.

A key appeal of MZs is that they compactly encode uncertainty in linear operators while preserving desirable closure properties under linear algebraic operations. Recently, this approach has been extended to hybrid systems~\cite{xie2025data} using hybrid zonotopes~\cite{bird2023hybrid}, nonlinear systems~\cite{zhang2025data} with constrained polynomial zonotopes~\cite{kochdumper2023constrained}, time-varying systems~\cite{akhormeh2025online}, line zonotopes~\cite{rego2025line}, and the corresponding control methods~\cite{oumer2025data, farjadnia2024robust, russo2022tube}. Constrained zonotopes (CZs)~\cite{scott2016constrained} further reduce conservatism by imposing linear consistency constraints on the coefficient space, and have demonstrated favorable accuracy--efficiency trade-offs in estimation and set-propagation tasks~\cite{scott2016constrained}. These ideas have been instantiated in data-driven reachability via \emph{constrained matrix zonotopes} (CMZs), which empirically deliver tighter sets but at a significant computational cost~\cite{alanwar2023data}. In particular, the number of generators in the reachable sets can grow rapidly across time steps, exacerbating both memory usage and runtime; this motivates principled order-reduction and approximation strategies~\cite{kopetzki2017methods}.

Despite this progress, a systematic understanding of \emph{how process noise perturbs the matrices} encoded by an MZ or CMZ, and consequently how such perturbations shape the propagation of reachable sets, remains lacking. This paper closes that gap by importing tools from \emph{matrix perturbation theory}. At a high level, perturbations rotate the subspaces of the system matrix; these geometric changes directly modulate the directions and magnitudes of the matrix-zonotope generators that drive reachability growth. We leverage classical and sharp bounds, namely
Wedin perturbation bounds for singular subspaces~\cite{wedin1972perturbation} and recent Cai--Zhang bounds that are tight up to constants~\cite{cai2018rate}---to obtain interpretable stability estimates for matrix-zonotope-based reachable-set propagation under noise.

\begin{figure*}[!t]
\centering
\includegraphics[width=0.95\textwidth]{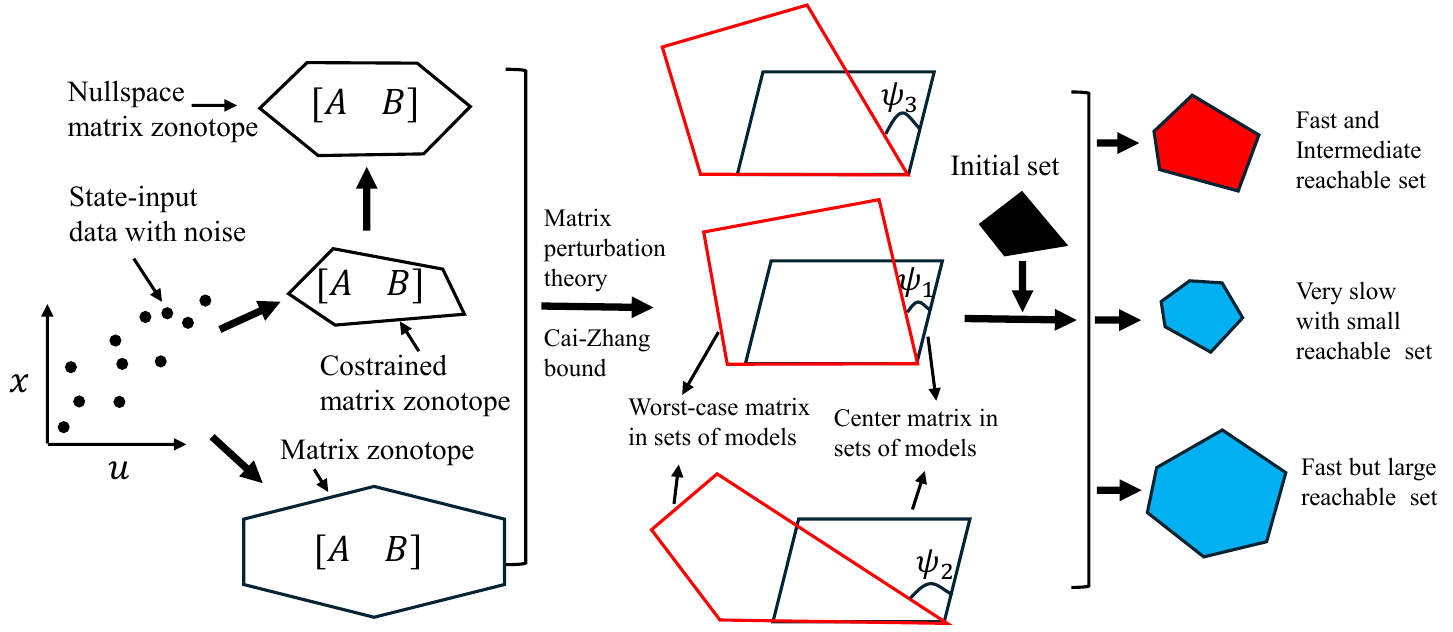}
\caption{Noisy state--input data give rise to a set of models for the pair $[A\ B]$. 
  Matrix perturbation theory, and in particular the Cai--Zhang bound, are used to quantify the worst-case rotation of the relevant subspaces across this set of models. 
  A CMZ keeps the subspace rotation small and yields a tight (small) reachable set, but is computationally expensive to propagate. 
  A MZ has a much simpler structure and is cheap to update, but allows large subspace rotations and therefore leads to a conservative (large) reachable set. 
  The proposed nullspace matrix zonotope (NMZ) aims to strike a balance between these extremes: 
  it retains an MZ-like structure while controlling the subspace rotation so that the resulting reachable set remains sufficiently tight.}
\label{fig:main_workflow}
\end{figure*}

As illustrated in Figure~\ref{fig:main_workflow}, our data-driven setting does not yield a single deterministic model but rather an entire set of models.
The figure highlights that the size of the resulting reachable set is largely determined by how much the relevant subspaces are allowed to rotate across this model set.
A CMZ restricts these rotations and therefore yields small reachable sets, but propagation through it is computationally expensive.
Conversely, propagation using an MZ is significantly faster yet permits large subspace rotations, leading to a conservative reachable set.
This motivates our intermediate representation, the nullspace matrix zonotope (NMZ), which retains an MZ-like structure while explicitly controlling the maximal subspace rotation.
The Cai--Zhang singular-subspace perturbation bound is the tool that makes this possible: it provides a computable worst-case rotation bound for the model set and guides the construction of the NMZ.

Our contributions are as follows. (1) We introduce a \emph{matrix zonotope perturbation framework} for data-driven reachability analysis that characterizes, starting from noisy trajectories, how perturbations across the set of models translate into directional changes and size inflation.
(2) We extend the perturbation analysis to \emph{constrained matrix zonotopes} (CMZs), clarifying how constraints materially reduce conservatism. (3) We propose the \emph{nullspace matrix zonotope} (NMZ), a CMZ-to-MZ coefficient-space approximation that mitigates generator growth across propagation steps by over-approximating the feasible coefficient set of a CMZ with an unconstrained zonotope and using it to construct a new MZ.

The remainder of the paper is organized as follows. Section~\ref{sec-prelim} reviews notation and preliminaries on zonotopes, CZs, MZs, CMZs, and matrix perturbation theory.
Section~\ref{sec:cz-matzono-score} develops Cai--Zhang-based bounds for evaluating matrix zonotopes and extends the framework to CMZs.
Section~\ref{sec:nmz} introduces the nullspace matrix zonotope for over-approximating the CMZ. Section~\ref{sec:exp} presents experimental results, and Section~\ref{sec:conclus} concludes with future directions.

\section{Notations and Problem Statement}\label{sec-prelim}
This section establishes the notation, set representations, and spectral tools used throughout the paper.

\subsection{Notation}

Matrices are uppercase Roman ($A$), sets are calligraphic ($\mathcal{R}_k$), and vectors/scalars are lowercase ($x_k$, $\alpha_i$). We write $\mathbf{1}_n$, $\mathbf{0}_n$ for the all-ones/zeros vectors, $I_n$ for the identity, and $\mathbf{1}_{n\times m}$, $\mathbf{0}_{n\times m}$ for all-ones/zeros matrices. For $A\!\in\!\mathbb{R}^{m\times n}$: $A_{(i,j)}$ is the $(i,j)$-th entry, $A_{(:,j)}$ the $j$-th column, $A_{(i,:)}$ the $i$-th row, and $A_{(i:j,\,:)}$ the submatrix of rows $i$ through $j$. We denote the transpose, Moore--Penrose pseudoinverse, and nullspace basis of $X$ by $X^\top$, $X^\dagger$, and $X^\perp$, respectively, and write $\otimes$ for the Kronecker product, $\mathrm{vec}(X)$ for vectorization, and $\mathrm{diag}(a)$ for the diagonal matrix formed from $a$. The nullspace of $X$ is $\mathcal{N}(X)$ with dimension $\mathrm{nullity}(X)$; $|X|$ denotes the element-wise absolute value. Superscripts index elements: $a^{(i)}$ is the $i$-th entry, $a^{(i:j)}$ a range.

The spectral and Frobenius norms are $\|M\|_2$ and $\|M\|_F$. Singular values are $\sigma_1(M)\!\ge\!\cdots\!\ge\!\sigma_{\min(m,n)}(M)$; eigenvalues of symmetric $A$ are $\lambda_1(A)\!\ge\!\cdots\!\ge\!\lambda_n(A)$. The notation $\sin\Theta(U_1,U_2)$ is the matrix of canonical angles between the column spaces of $U_1$ and $U_2$, and $a\wedge b=\min(a,b)$. The Minkowski sum is $\mathcal{A}\oplus\mathcal{B}=\{a+b:a\!\in\!\mathcal{A},b\!\in\!\mathcal{B}\}$, and the Cartesian product is $\mathcal{A}\times\mathcal{B}$. For $X\!\in\!\mathbb{R}^{n\times p}$, its orthogonal complement $X_\perp\!\in\!\mathbb{R}^{n\times(n-p)}$ satisfies $[X\ X_\perp]\!\in\!\mathcal{O}(n)$.

\subsection{Set representations}
Data-driven reachability analysis relies on several set representations that encode both state sets and model uncertainty. We recall their definitions below.
A \emph{zonotope}~\cite{girard2005reachability} with center $c_{\mathcal{Z}}\!\in\!\mathbb{R}^{n_x}$ and generator matrix $G_{\mathcal{Z}}\!\in\!\mathbb{R}^{n_x\times\gamma_{\mathcal{Z}}}$ is
\begin{equation}\label{def:zonotopes}
\mathcal{Z}=\langle c_{\mathcal{Z}},G_{\mathcal{Z}}\rangle=\bigl\{c_{\mathcal{Z}}+G_{\mathcal{Z}}\xi:\|\xi\|_\infty\le 1\bigr\}.
\end{equation}
A \emph{constrained zonotope} (CZ)~\cite{scott2016constrained} adds a linear equality constraint on the coefficients:
\begin{equation}\label{def:constrained_zonotope}
\mathcal{Z}=\langle G,c,A,b\rangle=\bigl\{c+G\xi:\|\xi\|_\infty\le 1,\;A\xi=b\bigr\}.
\end{equation}
Standard operations (linear map, Minkowski sum, generalized intersection) on CZs are given in~\cite{scott2016constrained}.

A \emph{matrix zonotope} (MZ)~\cite{althoff2010reachability} extends the zonotope to matrix spaces. Given a center $C_{\mathcal{M}}\!\in\!\mathbb{R}^{n\times m}$ and $\gamma$ generators $G^{(i)}_{\mathcal{M}}\!\in\!\mathbb{R}^{n\times m}$:
\begin{equation}\label{eq:matrix_zonotope}
\mathcal{M}=\langle C_{\mathcal{M}},\{G^{(i)}_{\mathcal{M}}\}\rangle=\bigl\{C_{\mathcal{M}}+\textstyle\sum_{i=1}^{\gamma}\xi^{(i)}G^{(i)}_{\mathcal{M}}:\|\xi\|_\infty\le 1\bigr\}.
\end{equation}
A \emph{constrained matrix zonotope} (CMZ)~\cite{alanwar2023data} additionally imposes $\hat{A}_{\mathcal{N}}\xi=\hat{b}_{\mathcal{N}}$:
\begin{multline}\label{def:new-cmz}
\mathcal{N}=\langle C_{\mathcal{N}},\tilde{G}_{\mathcal{N}},\hat{A}_{\mathcal{N}},\hat{b}_{\mathcal{N}}\rangle=\bigl\{C_{\mathcal{N}}+\textstyle\sum_{i=1}^{\gamma_{\mathcal{N}}}\xi^{(i)}G_{\mathcal{N}}^{(i)}:\\
\hat{A}_{\mathcal{N}}\xi=\hat{b}_{\mathcal{N}},\;\|\xi\|_\infty\le 1\bigr\},
\end{multline}
where $\tilde{G}_{\mathcal{N}}=[G_{\mathcal{N}}^{(1)}\cdots G_{\mathcal{N}}^{(\gamma_{\mathcal{N}})}]\!\in\!\mathbb{R}^{n_x\times(p\gamma_{\mathcal{N}})}$.
Converting the original CMZ definition in~\cite{alanwar2023data} to the form above amounts to setting $\hat{A}_{\mathcal{N}}=[\mathrm{vec}(A^{(1)}_{\mathcal{N}})\;\cdots\;\mathrm{vec}(A^{(\gamma_{\mathcal{N}})}_{\mathcal{N}})]$ and $\hat{b}_{\mathcal{N}}=\mathrm{vec}(B_{\mathcal{N}})$.
The coefficient space of $\mathcal{N}$ is
\begin{equation}\label{eq:coefficent_space}
    \Xi = \{\xi \mid \hat{A}_{\mathcal{N}} \xi = \hat{b}_{\mathcal{N}}, \|\xi\|_\infty \leq 1\}.
\end{equation}
\begin{proposition}[Multiplication of CMZ by a matrix ~\cite{alanwar2023data}]
    For every $\mathcal{N}_1 {=} \left\langle{C_{\mathcal{N}_1},\{G_{\mathcal{N}_1}^{(i)}\}_{i=0}^{\gamma_{\mathcal{N}_1}},\hat{A}_{\mathcal{N}_1},B_{\mathcal{N}_1}} \right\rangle \subset \mathbb{R}^{n_x \times p}$, and $R \,{\in} \, \mathbb{R}^{k \times n_x}$ the following identities hold
    \begin{align}
        R\mathcal{N}_1 &= \left\langle {RC_{\mathcal{N}_1},\{RG_{\mathcal{N}_1}^{(i)}\}_{i=0}^{\gamma_{\mathcal{N}_1}},\hat{A}_{\mathcal{N}_1},B_{\mathcal{N}_1}} \right\rangle \label{eq:RN-cmz} \ .
    \end{align}
\end{proposition}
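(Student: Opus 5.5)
The plan is to prove the identity by double set inclusion, working directly from the definition of a CMZ in \eqref{def:new-cmz}. The set $R\mathcal{N}_1$ is by definition $\{RN : N \in \mathcal{N}_1\}$. The key observation is that left multiplication by $R$ is a linear map on $\mathbb{R}^{n_x\times p}$ that acts only on the matrix part of each element. The coefficient vector $\xi$, and hence the feasibility conditions $\hat{A}_{\mathcal{N}_1}\xi = \hat{b}_{\mathcal{N}_1}$ (with $\hat{b}_{\mathcal{N}_1}=\mathrm{vec}(B_{\mathcal{N}_1})$) and $\|\xi\|_\infty\le 1$, are not touched by $R$ at all. The coefficient space $\Xi$ of \eqref{eq:coefficent_space} is therefore the same for both sides, and it suffices to match elements parametrized by the same $\xi\in\Xi$.

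For the inclusion ``$\subseteq$'', I take an arbitrary $N\in\mathcal{N}_1$. Then there is some $\xi\in\Xi$ with
\[
N = C_{\mathcal{N}_1}+\textstyle\sum_i \xi^{(i)} G^{(i)}_{\mathcal{N}_1}.
\]
By distributivity of matrix multiplication over addition, and because scalars commute with $R$,
\[
RN = RC_{\mathcal{N}_1} + \textstyle\sum_i \xi^{(i)} (RG^{(i)}_{\mathcal{N}_1}).
\]
This is exactly an element of the right-hand CMZ with the same feasible $\xi$.

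For the inclusion ``$\supseteq$'', I take any element of the right-hand side, given by some $\xi\in\Xi$. I run the same computation backwards to write it as $R\bigl(C_{\mathcal{N}_1}+\sum_i\xi^{(i)}G^{(i)}_{\mathcal{N}_1}\bigr)$. The matrix in parentheses lies in $\mathcal{N}_1$ because $\xi\in\Xi$.

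There is no genuine obstacle. The only point needing care is bookkeeping of dimensions and indexing. Each $RG^{(i)}_{\mathcal{N}_1}$ lies in $\mathbb{R}^{k\times p}$, so the result is a CMZ in $\mathbb{R}^{k\times p}$. The generator count is unchanged. The stacked generator matrix $\tilde G$ becomes $R\tilde G$, since $R[G^{(1)}\cdots G^{(\gamma)}]=[RG^{(1)}\cdots RG^{(\gamma)}]$. Finally, I note that the constraint pair $(\hat{A}_{\mathcal{N}_1},B_{\mathcal{N}_1})$ is carried over verbatim.
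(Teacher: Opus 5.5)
Your double-inclusion argument is correct. Left multiplication by $R$ distributes over $C_{\mathcal{N}_1}+\sum_i \xi^{(i)}G^{(i)}_{\mathcal{N}_1}$ and leaves the coefficient set $\Xi$ (defined by $\hat{A}_{\mathcal{N}_1}\xi=\hat{b}_{\mathcal{N}_1}$ and $\|\xi\|_\infty\le 1$) untouched, so both sides are images of the same $\Xi$ under the same map. The paper gives no proof of its own here; it imports the identity from the cited prior work, and your direct element-wise verification is the standard argument for it.
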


\begin{proposition}[CMZ-CZ Multiplication~\cite{alanwar2023data}]
\label{prop:cmz_cz}
Given a constrained matrix zonotope $\mathcal{N}=\langle C_\mathcal{N},\{G^{(1)}_\mathcal{N},\ldots,G^{(\gamma)}_\mathcal{N}\},\hat A_\mathcal{N},\hat b_\mathcal{N}\rangle$ with $C_\mathcal{N} \in \mathbb{R}^{n\times m}$, $G_\mathcal{N}^{(i)} \in \mathbb{R}^{n\times m}$, and a constrained zonotope $\mathcal{Z}=\langle G_z,c_z,A_z,b_z\rangle$ with $G_z=[g_z^{(1)}\ \cdots\ g_z^{(n_g)}]\in\mathbb{R}^{m\times n_g}$, $c_z\in\mathbb{R}^m$, the product admits a conservative CZ over-approximation:
\begin{equation}
\mathcal{N}\cdot\mathcal{Z}\subseteq\langle \bar G,\bar c,\bar A,\bar b\rangle,
\end{equation}
where the parameters are defined as:
\begin{align}
\bar c &= C_\mathcal{N} \,c_z, \quad \bar G = \big[\, G_\mathcal{N}^{(1)}c_z\ \cdots\ G_\mathcal{N}^{(\gamma)}c_z\  
C_\mathcal{N} G_z\  G_f \,\big],\nonumber\\
\bar A &= \mathrm{blkdiag}(\hat A_\mathcal{N},\ A_z),\quad
\bar b = \big[\hat b_\mathcal{N}^\top\ \ b_z^\top\big]^\top, \label{eq:cmz_cz_A}
\end{align}
with cross generators $G_f=\big[g_f^{(i,j)}\big]_{\substack{i=1,\ldots,\gamma\\j=1,\ldots,n_g}}$ scaled as:
\begin{align}
g_f^{(i,j)} &= \bar d^{(i,j)}\, G_\mathcal{N}^{(i)}\, g_z^{(j)}, \label{eq:cmz_cz_cross}\\
\bar d^{(i,j)} &= \max\big\{|\xi_{L,N}^{(i)}\xi_{L,z}^{(j)}|,\ 
|\xi_{U,N}^{(i)}\xi_{U,z}^{(j)}|, \nonumber\\
&\qquad\quad|\xi_{U,N}^{(i)}\xi_{L,z}^{(j)}|,\ 
|\xi_{L,N}^{(i)}\xi_{U,z}^{(j)}|\big\}, \nonumber
\end{align}
where $\xi_{L/U,N}^{(i)}$ and $\xi_{L/U,z}^{(j)}$ denote componentwise minima/maxima of the $i$-th and $j$-th generator coefficient of $\mathcal{N}$ and $\mathcal{Z}$ respectively.
\end{proposition}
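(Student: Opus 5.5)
We prove the containment by taking an arbitrary element of $\mathcal{N}\cdot\mathcal{Z}$ and writing it explicitly as a point of $\langle \bar G,\bar c,\bar A,\bar b\rangle$. Take $N\in\mathcal{N}$ and $z\in\mathcal{Z}$, so that
\[
N=C_\mathcal{N}+\textstyle\sum_{i=1}^{\gamma}\xi^{(i)}G_\mathcal{N}^{(i)},\qquad z=c_z+G_z\eta,
\]
with $\|\xi\|_\infty\le1$, $\hat A_\mathcal{N}\xi=\hat b_\mathcal{N}$, $\|\eta\|_\infty\le1$ and $A_z\eta=b_z$. Expanding the product bilinearly gives
\[
Nz=C_\mathcal{N}c_z+\textstyle\sum_i \xi^{(i)}G_\mathcal{N}^{(i)}c_z+C_\mathcal{N}G_z\eta+\sum_{i,j}\xi^{(i)}\eta^{(j)}G_\mathcal{N}^{(i)}g_z^{(j)}.
\]
The first term is $\bar c$. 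The next two groups are $\bar G$ restricted to its first $\gamma+n_g$ columns, with coefficient vector $(\xi,\eta)$. The last group is the bilinear cross term, which has to be expressed through the generators $G_f$.

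For the cross term, define the new coefficients $\zeta^{(i,j)}=\xi^{(i)}\eta^{(j)}/\bar d^{(i,j)}$, and set $\zeta^{(i,j)}=0$ whenever $\bar d^{(i,j)}=0$. In that degenerate case the product $\xi^{(i)}\eta^{(j)}$ is forced to vanish, since $|\xi^{(i)}\eta^{(j)}|\le\bar d^{(i,j)}$ by the argument below. With this choice,
\[
\xi^{(i)}\eta^{(j)}G_\mathcal{N}^{(i)}g_z^{(j)}=\zeta^{(i,j)}g_f^{(i,j)}
\]
exactly. The key step is to show $|\zeta^{(i,j)}|\le1$. By definition of the bounds, $\xi^{(i)}\in[\xi_{L,N}^{(i)},\xi_{U,N}^{(i)}]$ and $\eta^{(j)}\in[\xi_{L,z}^{(j)},\xi_{U,z}^{(j)}]$; these are interval hulls of the feasible coefficient sets (e.g.\ obtained by linear programs over $\Xi$ and over the CZ coefficient set), and they lie inside $[-1,1]$. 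The product $xy$ is bilinear, so over a box its absolute value is maximized at a vertex. Hence $|\xi^{(i)}\eta^{(j)}|\le\bar d^{(i,j)}$, and in particular $\bar d^{(i,j)}\le1$.

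Collecting terms, the full coefficient vector is $\omega=(\xi,\eta,\zeta)$ with $\|\omega\|_\infty\le1$, and $Nz=\bar c+\bar G\omega$. For the constraints, $\bar A=\mathrm{blkdiag}(\hat A_\mathcal{N},A_z)$ must be padded with zero columns for the $\zeta$ block, an implicit convention I would state explicitly. Then $\bar A\omega=(\hat A_\mathcal{N}\xi,\,A_z\eta)=\bar b$ holds. Therefore $Nz\in\langle\bar G,\bar c,\bar A,\bar b\rangle$.

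The only subtle points are the vertex argument for the bound on $|\xi^{(i)}\eta^{(j)}|$ and the zero-padding of $\bar A$. The vertex argument I would justify by noting that $xy$ is linear in each variable separately, so $|xy|$ is convex in each variable separately and attains its maximum over each interval at an endpoint. The containment is conservative, not exact, because the $\zeta^{(i,j)}$ are treated as free: the relations $\zeta^{(i,j)}\bar d^{(i,j)}=\xi^{(i)}\eta^{(j)}$ are dropped.
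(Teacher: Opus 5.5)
Your proof is correct. You expand $Nz$ bilinearly and lift each product to $\zeta^{(i,j)}=\xi^{(i)}\eta^{(j)}/\bar d^{(i,j)}$, setting $\zeta^{(i,j)}=0$ when $\bar d^{(i,j)}=0$. You bound $|\xi^{(i)}\eta^{(j)}|\le\bar d^{(i,j)}$ by the vertex argument on the coefficient box. You also pad $\bar A$ with zero columns for the $\gamma n_g$ cross factors, which the printed statement leaves implicit. Together these give exactly $Nz=\bar c+\bar G\omega$ with $\|\omega\|_\infty\le 1$ and $\bar A\omega=\bar b$.

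The paper does not prove this proposition itself; it imports it from \cite{alanwar2023data}. Your argument is the natural direct proof, treating each product $\xi^{(i)}\eta^{(j)}$ as a new rescaled free factor while keeping only the block-diagonal constraints, so there is no separate route in the paper to compare against.
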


\subsection{Spectral Perturbation Theory}
Quantifying how perturbations affect the subspaces of a matrix is central to our framework. We recall the Weyl inequality~\cite{weyl1912asymptotische}, which is used throughout this work.

\begin{theorem}[Weyl's Inequality with Hermitian Dilation~\cite{weyl1912asymptotische, tropp2015introduction}]\label{Wel_theory}
Let $A, E$ be $\mathbb{R}^{n \times n}$ matrices. Then for each $i = 1, \ldots, n$:
\begin{equation}
|\lambda_i(A + E) - \lambda_i(A)| \leq \|E\|_2.
\end{equation}
For singular values of general matrices $M, E \in \mathbb{R}^{m \times n}$:
\begin{equation}
|\sigma_i(M + E) - \sigma_i(M)| \leq \|E\|_2.
\end{equation}
\end{theorem}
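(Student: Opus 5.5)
The plan is to prove the eigenvalue statement first, via the Courant--Fischer min--max characterization, and then obtain the singular-value statement by applying it to the Hermitian dilation. For the first part we read $A$ and $E$ as symmetric, which is implicit in the ordering $\lambda_1\ge\cdots\ge\lambda_n$ fixed in the notation. Courant--Fischer gives
\begin{equation*}
\lambda_i(A)=\max_{\dim S=i}\ \min_{x\in S,\ \|x\|_2=1} x^\top A x .
\end{equation*}
For every unit vector $x$ we have
\begin{equation*}
x^\top A x-\|E\|_2\le x^\top(A+E)x\le x^\top A x+\|E\|_2 ,
\end{equation*}
because $|x^\top E x|\le\|E\|_2$. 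The min and the max are both monotone and commute with adding a constant. Taking $\min_{x\in S}$ and then $\max_{\dim S=i}$ across this chain therefore yields $\lambda_i(A)-\|E\|_2\le\lambda_i(A+E)\le\lambda_i(A)+\|E\|_2$, which is the first inequality.

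For the second part, take $M,E\in\mathbb{R}^{m\times n}$ and form the $(m+n)\times(m+n)$ symmetric dilation
\begin{equation*}
\mathcal{H}(M)=\begin{bmatrix}0 & M\\ M^\top & 0\end{bmatrix}.
\end{equation*}
Let $M=U\Sigma V^\top$ be an SVD with singular pairs $(u_j,v_j)$. Then the vectors $(u_j,\pm v_j)/\sqrt2$ are eigenvectors of $\mathcal{H}(M)$ with eigenvalues $\pm\sigma_j(M)$. Completing these to an orthonormal basis with vectors from the null spaces of $M$ and $M^\top$ supplies the remaining $|m-n|$ zero eigenvalues. Hence, for $i\le\min(m,n)$, the $i$-th largest eigenvalue of $\mathcal{H}(M)$ is exactly $\sigma_i(M)$.

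Next, $\mathcal{H}$ is linear, so $\mathcal{H}(M+E)=\mathcal{H}(M)+\mathcal{H}(E)$. Its spectral norm satisfies $\|\mathcal{H}(E)\|_2=\max_j\sigma_j(E)=\|E\|_2$, since the eigenvalues of $\mathcal{H}(E)$ are $\pm\sigma_j(E)$ together with zeros. Applying the symmetric case to $\mathcal{H}(M)$ and the perturbation $\mathcal{H}(E)$ then gives $|\sigma_i(M+E)-\sigma_i(M)|\le\|E\|_2$ for all $i\le\min(m,n)$.

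The main obstacle is the careful identification of the ordered spectrum of the dilation. One must check that the top $\min(m,n)$ eigenvalues are exactly the singular values in decreasing order. Because singular values are nonnegative, they dominate both the padding zeros and the negative copies $-\sigma_j$, so the ordering needs no further argument. Everything else reduces to the monotonicity of min and max in the Courant--Fischer characterization.
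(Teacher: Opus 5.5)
Your proof is correct and takes the route the paper points to through its citations and the theorem's title: Weyl's inequality for symmetric matrices via Courant--Fischer, transferred to singular values through the Hermitian dilation $\mathcal{H}(M)$ as in Tropp (the paper gives no proof beyond these references). Reading $A,E$ as symmetric is necessary, not just convenient: for a non-normal matrix such as a $2\times 2$ Jordan block perturbed by $\epsilon$ in the lower-left entry, the eigenvalues move by $\sqrt{\epsilon}$, so the eigenvalue inequality fails.
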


\subsection{Problem Statement}
We now formalize the data-driven reachability problem.
Consider an LTI system
\begin{equation}
    \label{eq:LTI-sys}
    x(k+1) = A x(k) + Bu(k)+ w(k)
\end{equation}
where the matrices $A$ and $B$ are unknown. Instead of a known system model, we are given $K$ input-state trajectories, each of length $T_i+1$, denoted by $\{u^{(i)}(k)\}^{T_i-1}_{k=0}$ and $\{x^{(i)}(k)\}^{T_i}_{k=0}$ for $i=1,\dots,K$.
The following data matrices ~\cite{alanwar2023data} are defined:
\begin{align}\label{def:input-state_trajectories} X_+ =& \left[\!\!\begin{array}{cccccccccc}x^{(1)}(1)\dots x^{(1)}(T_{1}) \dots x^{(K)}(1) \dots x^{(K)}(T_{K}) \end{array}\!\!\right]\nonumber , \\ X_- =& \left[\!\!\begin{array}{cccccccccc}x^{(1)}(0) \dots x^{(1)}(T_{1}\!-\!1) \dots x^{(K)}(0) \dots x^{(K)}(T_{K}\!-\!1) \end{array}\!\!\right]\nonumber , \\ U_- =& \left[\!\!\begin{array}{cccccccccc}u^{(1)}(0) \dots u^{(1)}(T_{1}\!-\!1) \!\dots\! u^{(K)}(0) \dots u^{(K)}(T_{K}\!-\!1) \end{array}\!\!\right]\! \ . \end{align}
The total number of data points is denoted by $T = \sum_{i=1}^{K}{T_i}$.

The process noise is assumed to be bounded by a zonotope $\mathcal{Z}_w = \left\langle c_{\mathcal{Z}_w}, G_{\mathcal{Z}_w} \right\rangle = \left\langle c_{\mathcal{Z}_w}, \begin{bmatrix} g_{\mathcal{Z}_w}^{(1)} \cdots g_{\mathcal{Z}_w}^{(\gamma_{\mathcal{Z}_w})} \end{bmatrix} \right\rangle$ where $\gamma_{\mathcal{Z}_w}$ is the number of generators of $\mathcal{Z}_w$, i.e. $w(k) \in \mathcal{Z}_w, \ \forall k$. The input is bounded by an input zonotope $\mathcal{U}_k$, i.e. $u(k) \in \mathcal{U}_k, \ \forall k$. The initial state of the system is bounded by the initial set $\mathcal{X}_0$, i.e. $x(0) \in \mathcal{X}_0$.

The exact reachable set $\mathcal{R}_{N}$ after $N$ time steps, subject to inputs ${u(k) \in \mathcal{U}_k}$, $\forall k {=}\{ 0, \dots, N-1\}$, and noise $w( \cdot) \in \mathcal{Z}_w$, is the set of all state trajectories starting from the initial set $\mathcal{X}_0$ after $N$ steps:
\begin{align} \label{eq:R}
    \mathcal{R}_{N} = \big\{& x(N) \in \mathbb{R}^{n_x} \, \big| \ x(k{+}1) = f(x(k),u(k)) + w(k), \nonumber\\
   & \, x(0) \in \mathcal{X}_0,
    u(k) \in \mathcal{U}_k, w(k) \in \mathcal{Z}_w: \nonumber\\ & \forall k \in \{0,...,N{-}1\}\big\}.
\end{align}

Given a matrix zonotope $\mathcal{M}_\Sigma$ that contains all possible system models $[A \ \ B]$ consistent with~\eqref{eq:LTI-sys}, an over-approximation of the reachable sets can be computed via the relation:
\begin{equation}
    \label{eq:reach-propagation}
    \hat{\mathcal{R}}_{k+1} = \mathcal{M}_{\Sigma} (\hat{\mathcal{R}}_{k} \times \mathcal{U}_{k}  ) \oplus \mathcal{Z}_w
\end{equation}
where $\mathcal{R}_k \subseteq \hat{\mathcal{R}}_k$, provided that the true system model $[A_{\mathrm{tr}} \ \ B_{\mathrm{tr}}] \in \mathcal{M}_\Sigma$~\cite{alanwar2023data}.

As derived in ~\cite{alanwar2023data}, the tightest data-consistent model set is a constrained matrix zonotope $\mathcal{N}_\Sigma$ (as opposed to the generic matrix zonotope $\mathcal{M}_\Sigma$ used in~\eqref{eq:reach-propagation}), defined as
\begin{equation}
    \label{eq:N-sigma}
    \mathcal{N}_\Sigma = (X_+ - \mathcal{N}_w) \begin{bmatrix} X_- \\ U_- \end{bmatrix}^\dagger
\end{equation}
where $\mathcal{N}_w = \left\langle C_{\mathcal{N}_w}, \tilde{G}_{\mathcal{N}_w}, \hat{A}_{\mathcal{N}_w}, \hat{b}_{\mathcal{N}_w} \right\rangle$ where
\begin{align} C_{\mathcal {N}_{w}} =& \left[\!\!\begin{array}{cccccccccc}c_{\mathcal {Z}_{w}} & \dots & c_{\mathcal {Z}_{w}}\end{array}\!\!\right] , \\ \tilde{G}_{\mathcal {N}_{w}}=&\left[\!\!\begin{array}{cccccccccc}G_{\mathcal {N}_{w}}^{(1)}&\dots &G_{\mathcal {N}_{w}}^{(\gamma _{\mathcal {N}_{w}})}\end{array}\!\!\right] , \\ G^{(j+(i-1)T)}_{\mathcal {N}_{w}} =& \left[\!\!\begin{array}{cccccccccc}0_{n \times (j-1)} &g_{\mathcal {Z}_{w}}^{(i)} & 0_{n \times (T-j)}\end{array}\!\!\right] \label{def:noise-matrix-zono} \ . \end{align}
$\forall i \in \{ 1, \dots, \gamma_{\mathcal{Z}_w} \}$ and $\forall j \in \{ 1, \dots, T \}$ and $\gamma_{\mathcal{N}_w}$ is the number of generators of $\mathcal{N}_w$.
The constraint equation of $\mathcal{N}_w$, as derived in~\cite{alanwar2023data}, is
\begin{equation}
    \label{eq:const-eq}
    (X_+ - C_{\mathcal{N}_w}) \begin{bmatrix} X_- \\ U_- \end{bmatrix}^\perp = \sum_{i = 1}^{\gamma_{\mathcal{Z}_w}T}{\xi^{(i)} G^{(i)}_{\mathcal{N}_w} \begin{bmatrix} X_- \\ U_- \end{bmatrix}^\perp} .
\end{equation}

We therefore seek to quantify how $\mathcal{Z}_w$ shapes $\mathcal{N}_\Sigma$ in \eqref{eq:N-sigma}---viewed as a center plus a perturbation that rotates the relevant singular subspaces---and how this impacts the reachable-set propagation \eqref{eq:reach-propagation}.

\section{Cai–Zhang Bound-Based Evaluation of Matrix Zonotopes}
\label{sec:cz-matzono-score}
The quality of a matrix representation fundamentally depends on its ability to maintain stable subspace structure under perturbations~\cite{o2023matrices}. While the classical Wedin perturbation theorem~\cite{wedin1972perturbation} and the Davis--Kahan $\sin\Theta$ theorem~\cite{davis1970rotation} provide bounds on singular subspace variations, the Cai--Zhang bound offers significantly tighter estimates by exploiting the block structure of the perturbation matrix. This section develops a quantitative framework for evaluating the quality of matrix zonotopes based on these refined bounds.
\begin{theorem}[Cai-Zhang Bounds Theory~\cite{cai2018rate}]
\label{thm:cai-zhang}
Let $C,\hat C,Z \in \mathbb{R}^{m \times n}$, where $C=U\Sigma V^{\top}$, $\hat C=\hat U\,\hat \Sigma\,\hat V^\top$, and $Z=\hat C-C$. Define
\[
\alpha=\sigma_{\min}\!\bigl(U^{\top}\hat C V\bigr),\quad
\beta=\bigl\|U_{\perp}^{\top}\hat C V_{\perp}\bigr\|_2,
\]
\[
Z_{12}=U^{\top} Z V_{\perp},
Z_{21}=U_{\perp}^{\top} Z V,
z_{12}=\|Z_{12}\|_2,
z_{21}=\|Z_{21}\|_2.
\]
If $\alpha^2>\beta^2+\min\{z_{12}^2,z_{21}^2\}$, then
\begin{align}
\|\sin\Theta(V,\hat V)\|
&\le
\left(
\frac{\alpha\,z_{12}+\beta\,z_{21}}
     {\alpha^2-\beta^2-\min\{z_{12}^2,z_{21}^2\}}
\right)\wedge 1,
\label{eq:cz-right}\\[-1mm]
\|\sin\Theta(U,\hat U)\|
&\le
\left(
\frac{\alpha\,z_{21}+\beta\,z_{12}}
     {\alpha^2-\beta^2-\min\{z_{12}^2,z_{21}^2\}}
\right)\wedge 1.
\label{eq:cz-left}
\end{align}
\end{theorem}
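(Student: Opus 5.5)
The plan is to reduce the statement to a Sylvester-type identity in the block coordinates induced by the singular subspaces of $C$, and then to separate the leading singular values of $\hat C$ from the trailing ones using $\alpha$ and $\beta$. Throughout, $U\in\mathbb{R}^{m\times r}$ and $V\in\mathbb{R}^{n\times r}$ are the leading $r$ singular vectors of $C$, and $\hat U,\hat V$ are the corresponding leading $r$ singular vectors of $\hat C$.

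\textbf{Block form.} Write $\hat C$ in the orthonormal bases $[U\ U_\perp]$ and $[V\ V_\perp]$:
\[
\hat C = [U\ U_\perp]\begin{bmatrix} A_{11} & A_{12}\\ A_{21} & A_{22}\end{bmatrix}[V\ V_\perp]^\top .
\]
Since $U^\top C V_\perp=0$ and $U_\perp^\top C V=0$, we get $A_{12}=Z_{12}$ and $A_{21}=Z_{21}$. By definition, $\sigma_{\min}(A_{11})=\alpha$ and $\|A_{22}\|_2=\beta$. We also use $\|\sin\Theta(V,\hat V)\|=\|V_\perp^\top\hat V\|$.

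\textbf{Spectral separation.} I would first show
\[
\sigma_r^2(\hat C)\ge\alpha^2, \qquad \sigma_{r+1}^2(\hat C)\le \beta^2+\min\{z_{12}^2,z_{21}^2\}.
\]
For the lower bound, take any unit $x$. Then $\|\hat C V x\|^2=\|A_{11}x\|^2+\|A_{21}x\|^2\ge\alpha^2$, and Courant--Fischer gives the claim. For the upper bound, restrict $\hat C$ to the codimension-$r$ subspace $V_\perp$. This gives $\sigma_{r+1}^2(\hat C)\le\|\hat C V_\perp\|^2\le \beta^2+z_{12}^2$. Applying the same argument to $\hat C^\top$, restricted to $U_\perp$, gives $\sigma_{r+1}^2(\hat C)\le\beta^2+z_{21}^2$. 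The hypothesis $\alpha^2>\beta^2+\min\{z_{12}^2,z_{21}^2\}$ is exactly the resulting spectral gap.

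\textbf{Sylvester identity and the main obstacle.} Next, combine $\hat C\hat V=\hat U\hat\Sigma_1$ and $\hat C^\top\hat U=\hat V\hat\Sigma_1$ with the block form. Projecting onto $U_\perp$ and $V_\perp$ yields
\[
U_\perp^\top\hat U\,\hat\Sigma_1 = A_{21}V^\top\hat V + A_{22}V_\perp^\top\hat V,\qquad
V_\perp^\top\hat V\,\hat\Sigma_1 = A_{12}^\top U^\top\hat U + A_{22}^\top U_\perp^\top\hat U .
\]
Naively substituting one identity into the other and using $\sigma_{\min}(\hat\Sigma_1)\ge\alpha$ produces numerators involving $\|A_{11}\|$, or $\|\hat\Sigma_1\|$, rather than $\alpha$. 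Removing that dependence is the main obstacle.

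My first attempt would follow Cai and Zhang directly. Write $\hat C=\hat U_1\hat\Sigma_1\hat V_1^\top+\hat U_2\hat\Sigma_2\hat V_2^\top$. Then bound $\|\hat V_2^\top V\|$ from below using $\alpha\le\sigma_{\min}(\hat C V)$, since $\hat C V=\hat U_1\hat\Sigma_1\hat V_1^\top V+\hat U_2\hat\Sigma_2\hat V_2^\top V$. Also bound $\|P_{\hat C V}\hat C V_\perp\|$ from above. The first term of that projection is controlled by $\alpha z_{12}$, via $A_{11}^\top A_{12}$ rescaled by $A_{11}^{-1}$ acting on the range of $\hat C V$. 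The second term is controlled by $\beta z_{21}$, via $A_{21}^\top A_{22}$. Comparing these two quantities through the gap $\alpha^2-\sigma_{r+1}^2(\hat C)$ should give
\[
\|V_\perp^\top\hat V\|\bigl(\alpha^2-\beta^2-\min\{z_{12}^2,z_{21}^2\}\bigr)\le \alpha z_{12}+\beta z_{21}.
\]
Capping by $1$, since a sine is at most $1$, gives \eqref{eq:cz-right}.

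Finally, applying the same argument to $\hat C^\top$ swaps the roles of $(U,V)$ and $(z_{12},z_{21})$ and yields \eqref{eq:cz-left}.
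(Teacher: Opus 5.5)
The paper gives no proof of this theorem; it quotes it from Cai--Zhang. So the comparison is only against what a complete proof requires.

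Your preparatory steps are correct:
\begin{itemize}
\item the block form, with $A_{12}=Z_{12}$ and $A_{21}=Z_{21}$;
\item $\sigma_r(\hat C)\ge\sigma_{\min}(\hat CV)\ge\alpha$;
\item $\sigma_{r+1}^2(\hat C)\le\beta^2+\min\{z_{12}^2,z_{21}^2\}$;
\item the two projected identities.
\end{itemize}
The gap is that the step producing the inequality is only asserted (``comparing these two quantities \dots\ should give''). On the route you choose, that comparison is Cai--Zhang's deterministic lemma $\|\sin\Theta(V,\hat V)\|\le\sigma_{\min}(\hat CV)\,\|P_{\hat CV}\hat CV_\perp\|/\bigl(\sigma_{\min}^2(\hat CV)-\sigma_{r+1}^2(\hat C)\bigr)$. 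You neither state nor prove it, and the ingredients you describe do not fit it:
\begin{itemize}
\item You say you will bound $\|\hat V_2^\top V\|$ from below, but that is the sine you must bound from above.
\item The claim that the first term is controlled by $\alpha z_{12}$ is false as stated. One gets $\|P_{\hat CV}UZ_{12}\|\le z_{12}$, so the numerator is $\sigma_{\min}(\hat CV)z_{12}+\beta z_{21}$, and $\sigma_{\min}(\hat CV)$ can far exceed $\alpha$. For example, take $r=2$, $A_{11}=\mathrm{diag}(10,1)$, $Z_{21}^\top Z_{21}=\mathrm{diag}(0,99)$, and $Z_{12}$ supported on its first row. Then this term equals $10z_{12}$, while $\alpha z_{12}=z_{12}$.
\end{itemize}
Passing from $\sigma_{\min}(\hat CV)$ to $\alpha$ is legitimate only for the assembled fraction, because $t\mapsto(tz_{12}+\beta z_{21})/(t^2-s)$ is decreasing on $t^2>s$. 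You never use this, so the proof has a hole exactly at the obstacle you identified.

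The hole is easy to close with what you already have. The factor $\|\hat\Sigma_1\|$ only appears because you combine your identities at the matrix level; take norms first instead. Set $x=\|V_\perp^\top\hat V\|=\|\sin\Theta(V,\hat V)\|$, $y=\|U_\perp^\top\hat U\|$ and $\sigma=\sigma_r(\hat C)$. Using $\|Y\hat\Sigma_1\|\ge\sigma\|Y\|$, $\|U^\top\hat U\|\le1$ and $\|V^\top\hat V\|\le1$, your two identities give $\sigma x\le z_{12}+\beta y$ and $\sigma y\le z_{21}+\beta x$. Multiply the first by the scalar $\sigma$ and insert the second; this yields $(\sigma^2-\beta^2)x\le\sigma z_{12}+\beta z_{21}$.

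Now finish in three moves. The hypothesis gives $\sigma\ge\alpha>\beta$. The ratio is decreasing in $\sigma$. Subtracting $\min\{z_{12}^2,z_{21}^2\}\ge0$ from the positive denominator only increases it. Hence
\[ x\le\frac{\sigma z_{12}+\beta z_{21}}{\sigma^2-\beta^2}\le\frac{\alpha z_{12}+\beta z_{21}}{\alpha^2-\beta^2}\le\frac{\alpha z_{12}+\beta z_{21}}{\alpha^2-\beta^2-\min\{z_{12}^2,z_{21}^2\}}, \]
and $x\le1$ is trivial. Eliminating $x$ instead of $y$ gives the bound for $\hat U$.

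This argument does not need your estimate on $\sigma_{r+1}(\hat C)$. That estimate is only required on the Cai--Zhang route, where it is combined with their lemma. The lemma is proved by writing $\hat C$ block upper-triangularly with respect to the range of $\hat CV$ and bounding $\|P_{\hat CV}\hat CV_\perp\|\le z_{12}+\beta z_{21}/\sigma_{\min}(\hat CV)$.
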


\begin{remark}[Full-row-rank] If $\operatorname{rank}(C)=m$, then $U_{\perp}=\varnothing$, hence
$\beta=0$ and $z_{21}=0$, which implies $\|\sin\Theta(U,\hat U)\|=0$ and
\begin{equation}
\label{eq:tight-right}
\|\sin\Theta(V,\hat V)\|
~\le~
\frac{z_{12}}{\alpha}
~=~
\frac{\|U^{\top} Z V_{\perp}\|_2}
     {\sigma_{\min}\!\bigl(U^{\top}\hat C V\bigr)}.
\end{equation}
\end{remark}

In data-driven reachability analysis, one does not observe a single system matrix but rather an entire family of data-consistent models.
Each candidate can be written as $\hat{C} = C + E(\xi)$, where $E(\xi)$ is induced by bounded noise and $\xi$ lies in a (constrained) zonotope.
The Cai--Zhang bound links the perturbation $E$ to an explicit upper bound on the canonical angles between the subspaces of $C$ and $C + E$, thereby quantifying how much the relevant subspaces can rotate across the model set.
In the following, we express $E(\xi)$ in terms of the matrix generators and maximize the bound over the coefficient constraints to obtain a computable worst-case subspace perturbation for the entire zonotope.

\subsection{Explicit Bounds for Matrix Zonotopes}
We now specialize the Cai--Zhang bound to the matrix zonotope setting and derive a closed-form worst-case expression.

Consider a matrix zonotope with perturbation $E(\boldsymbol{\xi}) = \sum_{i=1}^{p} \xi^{(i)} G^{(i)}$, where $\boldsymbol{\xi} \in [-1,1]^p$. According to Theorem~\ref{thm:cai-zhang},
\[
Z_{12}(\boldsymbol{\xi}) = \sum_{i=1}^{p} \xi^{(i)} U^{\top}G^{(i)}V_{\perp}.\]
We derive explicit bounds through two steps. By linearity and the triangle inequality~\cite{maligranda2006simple},
\begin{align}
\label{eq:Z12xi}
 \|Z_{12}(\boldsymbol{\xi})\|_2 &\leq \sum_{i=1}^{p} |\xi^{(i)}|\mu^{(i)}, \quad \mu^{(i)} = \|U^{\top}G^{(i)}V_{\perp}\|_2.
\end{align}
The Weyl inequality (Theorem~\ref{Wel_theory}) provides $\sigma_{\min}(C + E) \geq \sigma_{\min}(C) - \|E\|_2$. Since $\|E(\boldsymbol{\xi})\|_2 \leq \sum_i |\xi^{(i)}|\gamma^{(i)}$ with $\gamma^{(i)} = \|G^{(i)}\|_2$:
\begin{equation}
\label{eq:gxi}
g(\boldsymbol{\xi}) = \sigma_{\min}(C) - \sum_{i=1}^{p} |\xi^{(i)}|\gamma^{(i)}.
\end{equation}
Combining these results yields the explicit bound function:
\begin{equation}
\label{eq:fxi}
f(\boldsymbol{\xi}) = \frac{\sum_i |\xi^{(i)}|\mu^{(i)}}{g(\boldsymbol{\xi})}.
\end{equation}

Since \eqref{eq:fxi} increases monotonically in each $|\xi^{(i)}|$, the maximum over $[-1,1]^p$ is attained at a vertex where $\xi^{(i)} = \pm 1$, yielding the worst-case bound:
\begin{equation}
\label{eq:vertex}
\max_{\boldsymbol{\xi}} \|\sin\Theta\| \leq \frac{\sum_{i=1}^{p} \mu^{(i)}}{\sigma_{\min}(C) - \sum_{i=1}^{p} \gamma^{(i)}},
\end{equation}

\begin{remark}[Data-Driven Matrix Zonotope Structure]
\label{rem:common-factor}
In the data-driven framework \cite{alanwar2023data}, matrix zonotope generators share a common structure:
\begin{equation}
\label{eq:zonotope-structure}
\mathcal{M}_{\Sigma} = C + \sum_{i=1}^{p} \xi^{(i)} G^{(i)}, \quad G^{(i)} = G_{\mathcal{M}_w}^{(i)}H,
\end{equation}
where $H = [X_-; U_-]^{\dagger}$ is the data pseudoinverse~\cite{laub-mpp}. This shared right factor enables the factorization:
\begin{equation}
\label{eq:Z12-factor}
Z_{12}(\xi)=\left(\sum_{i=1}^p \xi^{(i)} U^{\top} G_{\mathcal{M}_w}^{(i)}\right)\left(H V_{\perp}\right)
\end{equation}
\end{remark}

This factorization separates data-dependent terms from noise-dependent terms, yielding the global bound:
\begin{align}
\label{eq:global}
\|\sin\Theta\| &\leq \frac{\kappa \sum_i \mu^{(i)}_w}{\sigma_{\min}(C) - \sum_i \gamma^{(i)}},\\
\text{where } \kappa &= \|HV_{\perp}\|_2, \quad \mu^{(i)}_w = \|U^{\top}G_{\mathcal{M}_w}^{(i)}\|_2
\end{align}

\subsection{Impact of Data Scaling}
The explicit bounds derived above depend on the data matrices through the pseudoinverse $H$. We now examine how the magnitude of the state-input data influences the bound in~\eqref{eq:global}. Consider scaled data $\Phi_{\text{new}} = D\Phi$ where $\Phi = [X_-; U_-]$ and $D = \mathrm{diag}(d_1, \ldots, d_{n+m})$ with $d_{\min} = \min_i d_i > 1$. Using the pseudoinverse property $\|A^{\dagger}\|_2 = 1/\sigma_{\min}(A)$ \cite{matrix-cookbook,laub-mpp}, we can obtain a bound on the norm of the new $H$:
\begin{equation}
\|H_{\text{new}}\|_2 = \frac{1}{\sigma_{\min}(D\Phi)} \leq \frac{1}{d_{\min}\sigma_{\min}(\Phi)} = \frac{\|H\|_2}{d_{\min}}.
\end{equation}

Both $\gamma^{(i)} = \|G_{\mathcal{M}_w}^{(i)}H\|_2$ and the effective block norms scale with $\|H\|_2$, while $\sigma_{\min}(C)$ remains largely unaffected by data rescaling. As shown in Figures~\ref{fig:cz-models} and~\ref{fig:cz-scale}, the bound in~\eqref{eq:global} decreases monotonically as data magnitudes increase, and the corresponding reachable sets become less conservative.

\begin{figure}[t]
  \centering
  \includegraphics[width=\linewidth,page=1]{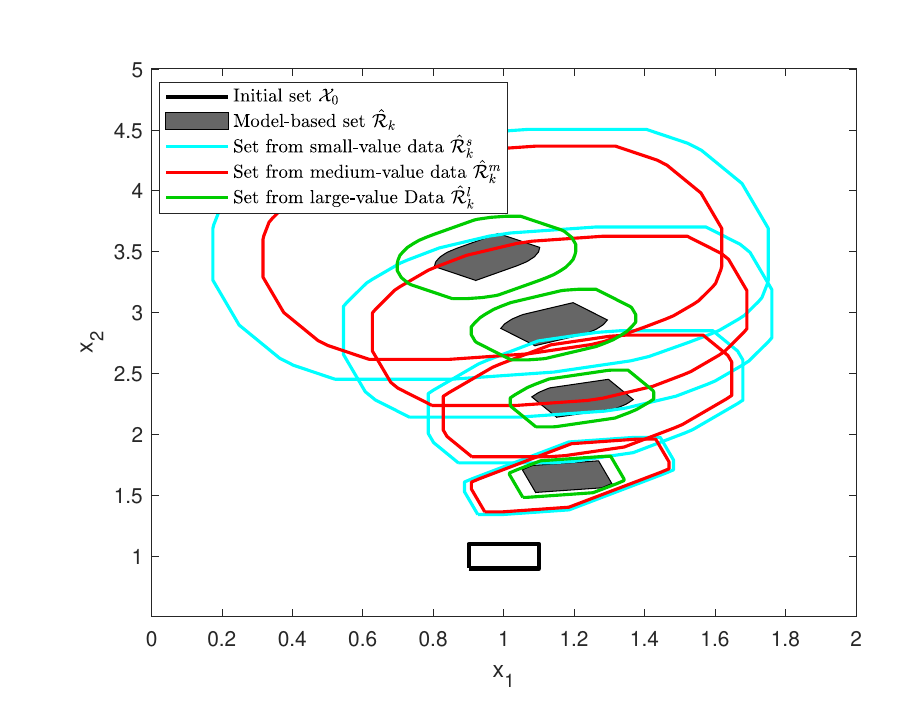}
  \caption{Reachable-set projections under different state-input data scalings: larger magnitudes reduce the upper bound on $\|\sin\Theta(V,\widehat V)\|$.}
  \label{fig:cz-models}
\end{figure}

\begin{figure}[t]
  \centering
  \includegraphics[width=0.9\linewidth,page=1]{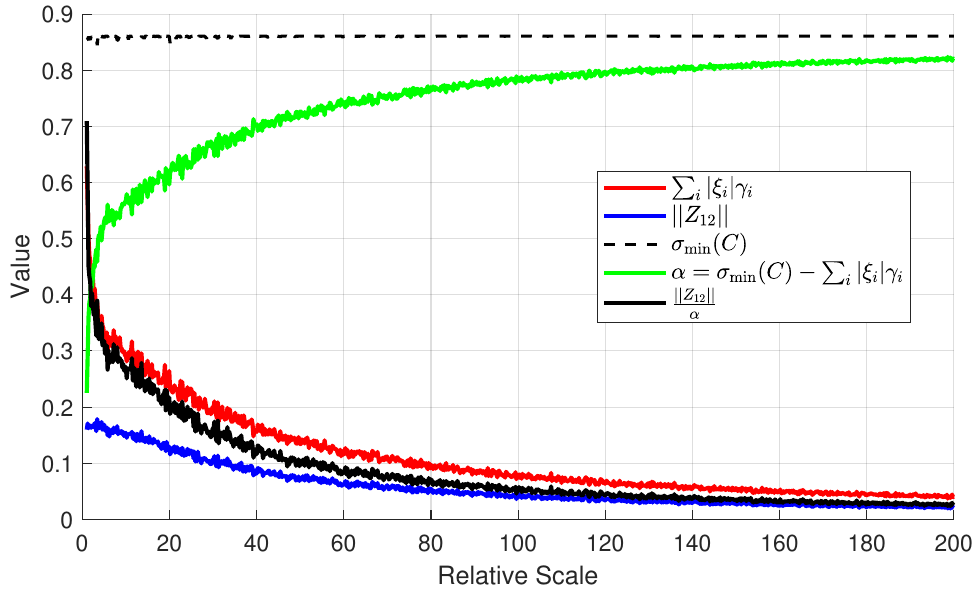}
  \caption{With fixed noise, enlarging state-input data magnitude (“Relative Scale”) increases $\alpha$ and reduces $\|Z_{12}\|$ and $\sum\gamma^{(i)}$, so the ratio $\|Z_{12}\|/\alpha$ decreases monotonically---as predicted by \eqref{eq:global}.}
  \label{fig:cz-scale}
\end{figure}

\subsection{Explicit Bounds For Constrained Matrix Zonotopes}
When the model set is a CMZ, the equality constraints restrict the feasible coefficients and can yield tighter bounds. For a CMZ $\mathcal{N} = \bigl\langle C_{\mathcal{N}}, \tilde{G}_{\mathcal{N}},\hat{A}_{\mathcal{N}}, \hat{b}_{\mathcal{N}} \bigr\rangle$, the bounds are obtained by finding $\xi_{max}$ such that
\begin{equation}
    \xi_{max} = \max_{\boldsymbol{\xi}} f(\xi) \quad\text{subject to} \quad \hat{A}_{\mathcal{N}}\xi = \hat{b}_{\mathcal{N}} \ \land \ \| \xi \|_{\infty} \le 1 .
    \label{eq:max-xi}
\end{equation}
where $f(\xi)$ is defined in~\eqref{eq:fxi}. One approach to finding $\xi_{max}$ is to enumerate all extreme points of the constraint polytope~\cite{Schneider_2013,ziegler_lecs}. Since the number of extreme points can be prohibitively large~\cite{patrix_cube_cuts}, we instead employ a more efficient linear programming approach. We transform~\eqref{eq:max-xi} into a linear fractional program, which is then converted into an equivalent linear program via the Charnes--Cooper transformation~\cite{charnes-cooper,Boyd_Vandenberghe_2004}. The procedure is outlined below.

From ~\eqref{eq:fxi} we can re-write ~\eqref{eq:max-xi} as follows:
\begin{multline}
    \xi_{max} = \max_{\boldsymbol{\xi}} \frac{\mu^\top |\xi|}{\sigma_{\text{min}}(C) -\gamma^\top |\xi|} \quad\text{subject to} \\ \hat{A}_{\mathcal{N}}\xi = \hat{b}_{\mathcal{N}} \ \land \ \| \xi \|_{\infty} \le 1 \ .
    \label{eq:cmz-bound-1}
\end{multline}
Assuming $\sigma_{\text{min}}(C) \ge \gamma^\top |\xi|$, ~\eqref{eq:cmz-bound-1} is equivalent to:
\begin{multline}
    \xi_{max} = \min_{\boldsymbol{\xi}} \frac{\sigma_{\text{min}}(C) -\gamma^\top |\xi|}{\mu^\top |\xi|} \quad\text{subject to} \\ \hat{A}_{\mathcal{N}}\xi = \hat{b}_{\mathcal{N}} \ \land \ \| \xi \|_{\infty} \le 1 \ \land \ \gamma^\top |\xi| \le \sigma_{\text{min}}(C).
    \label{eq:cmz-bound-2}
\end{multline}
Since $\xi$ can be expressed as $\xi_+ - \xi_-$ where $\xi_+ \ge \mathbf{0}$ and $\xi_- \ge \mathbf{0}$, therefore ~\eqref{eq:cmz-bound-2} is equivalent to:
\begin{align}
    \xi_{max} = \min_{\boldsymbol{\xi}} & \notag \frac{\sigma_{\text{min}}(C) -\gamma^\top (\xi_+ + \xi_-)}{\mu^\top (\xi_+ + \xi_-)} \quad\text{subject to} \\  & \notag \gamma^\top \xi_+ + \gamma^\top \xi_- \le \sigma_{\text{min}}(C) \ \land \hat{A}_{\mathcal{N}}(\xi_+ - \xi_-) = \hat{b}_{\mathcal{N}} \ \land \\ & \| \xi_+ - \xi_- \|_{\infty} \le \mathbf{1} \  \land \ \xi_+ \ge \mathbf{0} \ \land \ \xi_- \ge \mathbf{0} .
    \label{eq:cmz-bound-3}
\end{align}
Let $\bar{\xi} = 
\begin{bmatrix}
    \xi^T_+  \quad \xi^T_-
\end{bmatrix}^T$, then ~\eqref{eq:cmz-bound-3} is equivalent to:
\begin{align}
    \bar{\xi} = \min_{\boldsymbol{\xi}} & \notag \frac{\sigma_{\text{min}}(C) + \begin{bmatrix}
        -\gamma^\top & -\gamma^\top
    \end{bmatrix} \xi}{\begin{bmatrix}
        \mu^\top & \mu^\top
    \end{bmatrix} \xi} \quad\text{subject to} \\ 
    & \notag \begin{bmatrix}
        \gamma^\top & \gamma^\top
    \end{bmatrix} \xi \le \sigma_{\text{min}}(C) \ \land
    \xi \ge \mathbf{0} \ \land \\ &\begin{bmatrix} \hat{A}_{\mathcal{N}} & -\hat{A}_{\mathcal{N}} \end{bmatrix} \xi = \hat{b}_{\mathcal{N}} \ \land \
    \begin{bmatrix}
        I_{\gamma_{\mathcal{N}}} & -I_{\gamma_{\mathcal{N}}} \\
        -I_{\gamma_{\mathcal{N}}} & I_{\gamma_{\mathcal{N}}}
    \end{bmatrix} \xi \le \mathbf{1} .
    \label{eq:cmz-bound-4}
\end{align}

Using the Charnes--Cooper transformation~\cite{charnes-cooper,Boyd_Vandenberghe_2004}, \eqref{eq:cmz-bound-4} is converted into an equivalent linear program. After solving it, we recover $\bar{\xi}$, decompose it into the two halves $\xi_+$ and $\xi_-$, and set $\xi_{max} = \xi_+ - \xi_-$. The worst-case bound then becomes:
\begin{equation}
    \label{eq:cmz-worst-bounds}
    \max_{\boldsymbol{\xi}} \|\sin\Theta\| \leq \frac{\mu^\top |\xi_{max}|}{\sigma_{\text{min}}(C) -\gamma^\top |\xi_{\text{max}}|}.
\end{equation}

Before performing these calculations, one must verify that no feasible $\xi$ satisfies $\sigma_{\text{min}}(C) -\gamma^\top |\xi| \le 0$. This check can be carried out by solving the following linear program:
\begin{equation*}
    \xi_{test} = \min_{\boldsymbol{\xi}} -\gamma^\top |\xi| \quad \text{subject to} \quad \hat{A}\xi = \hat{b}_{\mathcal{N}} \ \land \ \| \xi \|_\infty \le 1
\end{equation*}
and checking if $\sigma_{\text{min}}(C) - \gamma^\top |\xi_{test}| \le 0$.

\section{Nullspace Matrix Zonotope: Tighter Explicit Bounds via Fewer Generators}
\label{sec:nmz}
Having derived an explicit Cai--Zhang bound for the CMZ, \eqref{eq:cmz-worst-bounds} reveals that both the numerator and the denominator accumulate contributions from \emph{all} $\gamma_{\mathcal{N}}$ generators. Two consequences follow. First, the bound is highly sensitive to the \emph{number of generators}: as $\gamma_{\mathcal{N}}$ grows, the accumulated terms become large. Second, the bound is attained at coefficient-space vertices (extreme combinations), which may introduce redundancy relative to the true worst case when many generators are present.

These observations motivate seeking an alternative representation that \emph{reduces the number of effective generators entering the explicit bound} while preserving set inclusion. To this end, we introduce the \emph{nullspace matrix zonotope (NMZ)}, obtained by over-approximating the feasible coefficient set of the CMZ by a zonotope of controllable complexity~\cite{raghuraman2022set}. Although the resulting matrix set is an over-approximation of the original CMZ, it typically has far fewer generators; we prove this at the end of this section.

\subsection{Nullspace Matrix Zonotope}
We begin with the definition of the nullspace matrix zonotope.

\begin{definition}[Nullspace Matrix Zonotope (NMZ)]
\label{def:nmz}
Given a CMZ $\mathcal{N}$ and a coefficient zonotope $\mathcal{Z}_\xi$ satisfying   $\Xi\subseteq\mathcal{Z}_\xi$~\eqref{eq:coefficent_space}, where $\mathcal{Z}_{\xi}$ is constructed from the nullspace of $\hat{A}_{\mathcal{N}}$ as detailed in Section ~\ref{sec:poly-to-zono-method}. The \emph{Nullspace Matrix Zonotope (NMZ)} associated with $(\mathcal{N},\mathcal{Z}_\xi)$ is: \begin{align}
\mathcal{M} &= \left\{ M \biggm| M = C_{\mathcal{N}} + \sum_{i=1}^p \xi^{(i)} G^{(i)}_{\mathcal{N}}, \ \ \xi \in \mathcal{Z}_\xi \right\} \\
&= \left\{ M \biggm| M = C_{\mathcal{N}} + \sum_{i=1}^p \left( c_{\xi}^{(i)} + (G_\xi \eta)^{(i)} \right) G^{(i)}_{\mathcal{N}}, \ \left\| \eta \right\|_{\infty} \le 1 \right\}
\label{eq:sub-with-zono} .
\end{align}
We regroup the terms into a new center $C_\mathcal{M}$ and new generator:
\begin{align}\label{new_nmz_center}
\mathcal{M} = \left\{ M \biggm| M = \left( C_{\mathcal{N}} + \sum_{i=1}^p c_{\xi}^{(i)} G^{(i)}_{\mathcal{N}} \right) + \sum_{i=1}^p (G_\xi \eta)^{(i)} G^{(i)}_{\mathcal{N}}, \left\| \eta \right\|_{\infty} \le 1 \right\} .
\end{align}
The new center is $C_\mathcal{M} = C_{\mathcal{N}} + \sum_{i=1}^p c_{\xi}^{(i)} G^{(i)}_{\mathcal{N}}$.
By expanding $(G_\xi \eta)^{(i)} = \sum_{j=1}^\gamma (G_\xi){(i,j)} \eta^{(j)}$ and swapping the summations, the new generator term becomes:
\begin{align}
\label{eq:new-mz-gens}
\sum_{i=1}^p \left( \sum_{j=1}^\gamma (G_\xi)_{(i,j)} \eta^{(j)} \right) G^{(i)}_{\mathcal{N}} = \sum_{j=1}^\gamma \eta^{(j)} \left( \sum_{i=1}^p (G_\xi)_{(i,j)} G^{(i)}_{\mathcal{N}} \right) .
\end{align}
This gives $\gamma$ new generators $G^{(j)}_{\mathcal{M}} = \sum_{i=1}^p (G_\xi)_{(i,j)} G^{(i)}_{\mathcal{N}}$.
Thus, the over-approximating matrix zonotope $\mathcal{M} = \left\langle {C}_{\mathcal{M}}, \ \{G^{(j)}_{\mathcal{M}}\}_{j=1}^\gamma \right\rangle$ is:
\begin{align}\label{eq:new_mz}
\mathcal{M} &= \left\{ M \biggm| M =C_\mathcal{M} + \sum_{j=1}^{\gamma} \eta^{(j)} G^{(j)}_\mathcal{M}, \ \eta^{(j)} \in [-1,1]\right\}.
\end{align}
Because $\Xi\subseteq\mathcal{Z}_\xi$ and the map $\xi\mapsto C_{\mathcal{N}}+\sum_i\xi^{(i)}G_{\mathcal{N}}^{(i)}$ is affine, it follows that $\mathcal{N}\subseteq\mathcal{M}$. The number of NMZ generators equals the number of columns $\gamma$ of $G_\xi$ and can be chosen significantly smaller than $p$.
The method for constructing $\mathcal{Z}_\xi$, described in Section~\ref{sec:poly-to-zono-method}, is closely related to the nullspace properties of $\hat{A}_\mathcal{N}$, hence the name.
\end{definition}

We now explain why the NMZ worst-case sin $\Theta$ distance \emph{upper-bounds} the CMZ worst case. Firstly, we refactor CMZ around the same center in \eqref{new_nmz_center}.
For any feasible CMZ coefficient $\xi\in\Xi$,
\begin{align}
\label{eq:cmz-refactor}
C_{\mathcal N}+\sum_{i=1}^p \xi^{(i)}G_{\mathcal N}^{(i)}
&= \Bigl(C_{\mathcal N}+\sum_{i=1}^p c_{\xi}^{(i)}G_{\mathcal N}^{(i)}\Bigr)
   + \sum_{i=1}^p\!\bigl(\xi^{(i)}-c_{\xi}^{(i)}\bigr)G_{\mathcal N}^{(i)} \nonumber\\
&= C_{\mathcal M}\ +\ \underbrace{\sum_{i=1}^p\!\bigl(\xi^{(i)}-c_{\xi}^{(i)}\bigr)G_{\mathcal N}^{(i)}}_{Z_{\mathrm{CMZ}}(\xi)} .
\end{align}
Thus the CMZ residual set at the common center is
\begin{equation}
\label{eq:Delta-cmz}
\Delta_{\mathrm{CMZ}}\ =\ \Bigl\{\,Z_{\mathrm{CMZ}}(\xi)=\sum_{i=1}^p\bigl(\xi^{(i)}-c_{\xi}^{(i)}\bigr)G_{\mathcal N}^{(i)}\ \Bigm|\ \xi\in\Xi\,\Bigr\}.
\end{equation}
Secondly, from \eqref{new_nmz_center}–\eqref{eq:new-mz-gens}, the corresponding residual set of NMZ is
\[
\Delta_{\mathrm{NMZ}}\ =\ \Bigl\{\,Z_{\mathrm{NMZ}}(\eta)=\sum_{i=1}^p (G_\xi\eta)^{(i)}G_{\mathcal N}^{(i)}\ \Bigm|\ \|\eta\|_\infty\le 1\,\Bigr\}.
\]
Third, take any $\xi\in\Xi$.
Since $\Xi\subseteq\mathcal{Z}_\xi$, there exists $\eta$ with $\|\eta\|_\infty\le 1$ such that
\begin{equation}
\label{eq:coeff-match}
\xi\ =\ c_\xi+G_\xi\eta \quad\Longrightarrow\quad \xi-c_\xi\ =\ G_\xi\eta .
\end{equation}
Substituting \eqref{eq:coeff-match} into \eqref{eq:cmz-refactor} gives
\[
Z_{\mathrm{CMZ}}(\xi)
= \sum_{i=1}^p (G_\xi\eta)^{(i)}G_{\mathcal N}^{(i)}
= Z_{\mathrm{NMZ}}(\eta)\ \in\ \Delta_{\mathrm{NMZ}},
\]
hence
\begin{equation}
\label{eq:Delta-inclusion}
\Delta_{\mathrm{CMZ}}\ \subseteq\ \Delta_{\mathrm{NMZ}} .
\end{equation}

Finally, we evaluate the bound in \eqref{eq:tight-right} at the \emph{fixed} center $C_{\mathcal M}$ for both sets. By \eqref{eq:Delta-inclusion} and monotonicity of the supremum with respect to set inclusion,
\begin{equation}
\label{eq:bound-ordering}
\sup_{Z\in \Delta_{\mathrm{CMZ}}}\ \text{RHS of \eqref{eq:tight-right}}
\ \le\
\sup_{Z\in \Delta_{\mathrm{NMZ}}}\ \text{RHS of \eqref{eq:tight-right}} .
\end{equation}
Therefore the NMZ worst-case sin $\Theta$ distance \emph{upper-bounds} the CMZ worst case under the common-center refactorization.
In particular, it is sufficient (and often simpler) to analyze the NMZ Cai–Zhang bound.

\begin{figure*}[!t]
  \centering
  \includegraphics[width=0.95\linewidth,page=1]{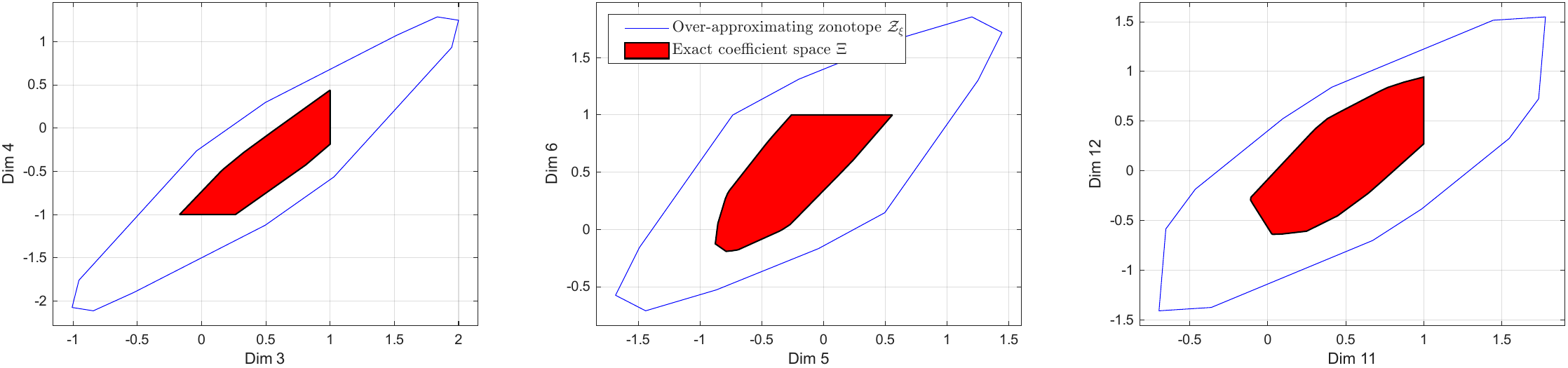}
  \caption{A comparison between the exact set of possible coefficients $\Xi$ and the over-approximating zonotope $\mathcal{Z}_\xi$, with the number of trajectories = 10, and the number of steps per trajectory = 3.}
  \label{fig:coeff-space-plots}
\end{figure*}

\subsection{Over-approximating the Coefficient Space}
\label{sec:poly-to-zono-method}
The NMZ definition above requires a coefficient zonotope $\mathcal{Z}_\xi$ that encloses the feasible set $\Xi$. We now describe how to construct such a zonotope efficiently using the nullspace of the constraint matrix.

Consider a CMZ with $\gamma$ generators. By Definition~\ref{def:new-cmz}, the generator coefficients satisfy $\hat{A}\xi = \hat{b}$ and $\left\| \xi \right\|_{\infty} \le 1$, which defines an H-representation of a convex polytope $\mathcal{P}$~\cite{girard2008zonotope,Schneider_2013, ziegler_lecs}.

A standard way to over-approximate $\mathcal{P}$ with a zonotope $\mathcal{Z_P}$ is to represent $\mathcal{P}$ as a constrained zonotope~\cite{scott2016constrained} and then apply the reduction method of~\cite{scott2016constrained} with the number of constraints set to zero.

A more refined approach, which yields better results in our experiments, involves projecting $\mathcal{P}$ onto the nullspace of $\hat{A}$ to obtain $\mathcal{P}'$, over-approximating $\mathcal{P}'$ with an interval or a zonotope, and then lifting the result back to the original coordinates.

Specifically, we compute the H-representation of $\mathcal{P}'$, over-approximate $\mathcal{P}'$ by an interval $\mathcal{I_P}$ via a sequence of linear programs, and then lift $\mathcal{I_P}$ back to the original coordinates to obtain $\mathcal{Z_P}$. The following theorem provides an efficient way to compute the H-representation of $\mathcal{P}'$.

\begin{theorem}
    \label{thm:lower-H-rep}
    $\mathcal{P}' = \{x \in \mathbb{R}^{\mathrm{nullity}(\hat{A})} \ | \ Qx \le s \}$, where $Q = \begin{bmatrix} I_{\gamma} \\ -I_{\gamma} \end{bmatrix}\hat{A}^\perp$, and $s = 1_{2\gamma} - \begin{bmatrix} I_{\gamma} \\ -I_{\gamma} \end{bmatrix}\xi_p$, where $\xi_p$ is a particular solution for the linear system $\hat{A}\xi = \hat{b}$.
\end{theorem}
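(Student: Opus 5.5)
The plan is to use the standard parametrization of the solution set of a consistent linear system. I will assume $\hat{A}\xi=\hat{b}$ is feasible, which holds because the true noise realization gives a feasible coefficient. Then every solution can be written as $\xi=\xi_p+\hat{A}^\perp x$. Here $\xi_p$ is any particular solution. The matrix $\hat{A}^\perp\in\mathbb{R}^{\gamma\times \mathrm{nullity}(\hat{A})}$ has columns forming a basis of $\mathcal{N}(\hat{A})$, and $x\in\mathbb{R}^{\mathrm{nullity}(\hat{A})}$. Since the basis columns are linearly independent, the map $x\mapsto \xi_p+\hat{A}^\perp x$ is an affine bijection from $\mathbb{R}^{\mathrm{nullity}(\hat{A})}$ onto the affine subspace $\{\xi:\hat{A}\xi=\hat{b}\}$. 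Accordingly, I will interpret $\mathcal{P}'$ as the preimage of $\mathcal{P}$ under this map, i.e., $\mathcal{P}$ expressed in nullspace coordinates. The lifting back to original coordinates used afterwards is precisely this affine map.

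\textbf{Step 1 (equality constraint).} I will check that the equality constraint disappears. For any $x$ we have $\hat{A}(\xi_p+\hat{A}^\perp x)=\hat{b}+0=\hat{b}$. Conversely, if $\hat{A}\xi=\hat{b}$, then $\xi-\xi_p\in\mathcal{N}(\hat{A})$, so $\xi-\xi_p=\hat{A}^\perp x$ for a unique $x$. Hence $\xi\in\mathcal{P}$ if and only if $\xi=\xi_p+\hat{A}^\perp x$ for some $x$ with $\|\xi_p+\hat{A}^\perp x\|_\infty\le 1$.

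\textbf{Step 2 (box constraint).} I will rewrite $\|\xi\|_\infty\le 1$ as the pair of vector inequalities $\xi\le \mathbf{1}_\gamma$ and $-\xi\le\mathbf{1}_\gamma$, which stack to $\begin{bmatrix} I_\gamma\\ -I_\gamma\end{bmatrix}\xi\le \mathbf{1}_{2\gamma}$. Substituting $\xi=\xi_p+\hat{A}^\perp x$ and moving the constant term to the right-hand side gives
\[
\begin{bmatrix} I_\gamma\\ -I_\gamma\end{bmatrix}\hat{A}^\perp x\le \mathbf{1}_{2\gamma}-\begin{bmatrix} I_\gamma\\ -I_\gamma\end{bmatrix}\xi_p .
\]
This is exactly $Qx\le s$, so $\mathcal{P}'=\{x: Qx\le s\}$ and $\mathcal{P}=\xi_p+\hat{A}^\perp\mathcal{P}'$.

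\textbf{Main obstacle.} The delicate point is definitional rather than computational. The word ``projecting'' must be read as the coordinate representation under the chosen basis $\hat{A}^\perp$, not an orthogonal projection of $\mathcal{P}$ into $\mathbb{R}^\gamma$. The argument also needs the bijectivity from Step 1, which ensures that the H-representation describes $\mathcal{P}'$ exactly rather than only containing it. A consequence worth noting is that any enclosure $\mathcal{I_P}\supseteq\mathcal{P}'$ lifts to a valid enclosure $\xi_p+\hat{A}^\perp\mathcal{I_P}\supseteq\mathcal{P}$. If $\hat{A}^\perp$ is chosen orthonormal, the coordinates $x$ coincide with the orthogonal projection of $\xi-\xi_p$ onto $\mathcal{N}(\hat{A})$, which reconciles the two readings.
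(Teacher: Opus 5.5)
Your proof is correct and follows the same route as the paper: you parametrize the solution set of $\hat{A}\xi=\hat{b}$ as $\xi_p+\hat{A}^\perp x$ and substitute it into the stacked box inequality $\begin{bmatrix} I_\gamma\\ -I_\gamma\end{bmatrix}\xi\le\mathbf{1}_{2\gamma}$ to obtain $Qx\le s$. Your added remarks make explicit two points the paper's proof leaves implicit: that $x\mapsto\xi_p+\hat{A}^\perp x$ is a bijection onto the affine solution set, so the H-representation is exact rather than merely enclosing, and that ``projecting'' should be read as taking coordinates in the nullspace basis $\hat{A}^\perp$.
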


\begin{proof}
    $\hat{A}\xi = \hat{b} \iff \xi = \xi_{\mathcal{N}} + \xi_p$ where $\xi_{\mathcal{N}} \in \mathcal{N}(\hat{A})$, and $\xi_p$ is a particular solution.
    $\xi_{\mathcal{N}} \in \mathcal{N}(\hat{A})$ is equivalent to $\xi_{\mathcal{N}}$ being a linear combination of a basis of $\mathcal{N}(\hat{A})$. Meaning $\xi_{\mathcal{N}}$ can \textbf{always} be written as $\hat{A}^{\perp}x_{\mathcal{N}}$ for some vector $x_{\mathcal{N}} \in \mathbb{R}^{\mathrm{nullity}(\hat{A})}$. Since the hypercube is represented by the inequality $$\begin{bmatrix} I_{\gamma} \\ -I_{\gamma} \end{bmatrix}\xi \le \mathbf{1}_{2\gamma} \iff \begin{bmatrix} I_{\gamma} \\ -I_{\gamma} \end{bmatrix}(\xi_{\mathcal{N}} + \xi_p) \le \mathbf{1}_{2\gamma} \ ,$$ then $x_{\mathcal{N}}$ exactly satisfies the inequality $$\begin{bmatrix} I_{\gamma} \\ -I_{\gamma} \end{bmatrix}(\hat{A}^{\perp}x_{\mathcal{N}} + \xi_p) \le \mathbf{1}_{2\gamma} \iff \begin{bmatrix} I_{\gamma} \\ -I_{\gamma} \end{bmatrix}\hat{A}^{\perp}x_{\mathcal{N}} \le \mathbf{1}_{2\gamma} - \begin{bmatrix} I_{\gamma} \\ -I_{\gamma} \end{bmatrix}\xi_p \ .$$
\end{proof}

Given the H-representation of $\mathcal{P}'$, it can be over-approximated by an interval $\mathcal{I_P}$~\cite{althoff2010reachability} by solving a sequence of linear programs subject to $Qx \le s$. The following theorem shows how to construct $\mathcal{Z_P}$ from $\mathcal{I_P}$.

\begin{theorem}
    \label{thm:over-appox-zono}
    $\mathcal{Z_P} = \underbrace{\xi_p + \hat{A}^\perp c_{\mathcal{I_P}}}_{\text{new center}} + \underbrace{\hat{A}^\perp \text{diag}(r_\mathcal{I_P})}_{\text{new generators}} \beta$, where $\|\beta\|_\infty \le 1$, and $c_{\mathcal{I_P}}$ and $r_{\mathcal{I_P}}$ are the center vector and the radius vector of $\mathcal{I_P}$, respectively, as in ~\cite{althoff2010reachability}.
\end{theorem}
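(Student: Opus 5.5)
The statement asserts that $\mathcal{Z_P}$ encloses the polytope $\mathcal{P}$, i.e. $\Xi\subseteq\mathcal{Z_P}$; this is what Definition~\ref{def:nmz} needs. The plan is to show that every feasible coefficient vector $\xi\in\Xi$ can be written as $\xi_p+\hat{A}^\perp c_{\mathcal{I_P}}+\hat{A}^\perp\mathrm{diag}(r_{\mathcal{I_P}})\beta$ with $\|\beta\|_\infty\le 1$. The argument has three steps: parametrize $\Xi$ through the nullspace exactly as in Theorem~\ref{thm:lower-H-rep}, use the interval enclosure of $\mathcal{P}'$, and push that enclosure through the affine lifting map $x\mapsto \xi_p+\hat{A}^\perp x$.

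Concretely, fix $\xi\in\Xi$. Since $\hat{A}\xi=\hat{b}$, the proof of Theorem~\ref{thm:lower-H-rep} gives $x_{\mathcal{N}}\in\mathbb{R}^{\mathrm{nullity}(\hat{A})}$ with $\xi=\xi_p+\hat{A}^\perp x_{\mathcal{N}}$. Because $\|\xi\|_\infty\le 1$, Theorem~\ref{thm:lower-H-rep} also shows $Qx_{\mathcal{N}}\le s$, so $x_{\mathcal{N}}\in\mathcal{P}'$.

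Next, $\mathcal{I_P}$ is built by solving, for each coordinate $k$, the linear programs $\min x^{(k)}$ and $\max x^{(k)}$ subject to $Qx\le s$. This gives $\mathcal{P}'\subseteq\mathcal{I_P}=[l,u]$. With $c_{\mathcal{I_P}}=(l+u)/2$ and $r_{\mathcal{I_P}}=(u-l)/2$, every $x\in\mathcal{I_P}$ can be written as $x=c_{\mathcal{I_P}}+\mathrm{diag}(r_{\mathcal{I_P}})\beta$ with $\|\beta\|_\infty\le1$. For coordinates with $r^{(k)}_{\mathcal{I_P}}>0$, take $\beta^{(k)}=(x^{(k)}-c^{(k)}_{\mathcal{I_P}})/r^{(k)}_{\mathcal{I_P}}$. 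For degenerate coordinates with $r^{(k)}_{\mathcal{I_P}}=0$, take $\beta^{(k)}=0$.

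Substituting $x_{\mathcal{N}}$ into the lifting map gives
\[
\xi=\xi_p+\hat{A}^\perp c_{\mathcal{I_P}}+\hat{A}^\perp\mathrm{diag}(r_{\mathcal{I_P}})\beta .
\]
By the definition of a zonotope~\eqref{def:zonotopes}, this says $\xi\in\mathcal{Z_P}$, with center $\xi_p+\hat{A}^\perp c_{\mathcal{I_P}}$ and generators $\hat{A}^\perp\mathrm{diag}(r_{\mathcal{I_P}})$. This is simply the fact that the image of a zonotope under an affine map is a zonotope with transformed center and generators.

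The main obstacle is not the algebra but the well-posedness of the interval step. The LPs must be feasible and bounded, so that $\mathcal{I_P}$ is finite. Feasibility holds whenever $\Xi\neq\varnothing$, which is guaranteed because the true noise realization is consistent with the data. For boundedness, I would use that $\hat{A}^\perp$ has full column rank, being a basis of $\mathcal{N}(\hat{A})$. Then $Q=[I;-I]\hat{A}^\perp$ is injective, and $\{x: Qx\le s\}$ is the preimage of a bounded box under an injective linear map, hence bounded. With this justified, the chain of inclusions $\Xi=\xi_p+\hat{A}^\perp\mathcal{P}'\subseteq\xi_p+\hat{A}^\perp\mathcal{I_P}=\mathcal{Z_P}$ completes the argument. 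The number of generators of $\mathcal{Z_P}$ is therefore at most $\mathrm{nullity}(\hat{A})$, which supports the generator-reduction claim.
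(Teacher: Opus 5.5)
Your proposal is correct and follows essentially the same route as the paper. You use the nullspace parametrization $\xi=\xi_p+\hat{A}^\perp x_{\mathcal{N}}$ from Theorem~\ref{thm:lower-H-rep}, invoke $\mathcal{P}'\subseteq\mathcal{I_P}$, rewrite the interval as $c_{\mathcal{I_P}}+\mathrm{diag}(r_{\mathcal{I_P}})\beta$, and substitute into the affine lifting map. Your extra checks (zero radii, and boundedness of the interval LPs via injectivity of $\hat{A}^\perp$) are sound and make explicit what the paper's short proof leaves implicit.
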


\begin{proof}
    $\mathcal{P}' \subseteq \mathcal{I_P} \implies x_{\mathcal{N}} \in \mathcal{I_P}$. Since $\xi = \hat{A}^{\perp}x_{\mathcal{N}} + \xi_p$, therefore $\xi \in \hat{A}^{\perp}\mathcal{I_P} + \xi_p$. From ~\cite{althoff2010reachability} we can re-represent $\mathcal{I_P}$ as a zonotope $c_{\mathcal{I_P}} + \text{diag}(r_\mathcal{I_P})\beta$ where $\|\beta\|_\infty \le 1$, substituting gives us 
    \begin{multline}
        \xi = \hat{A}^{\perp}(c_{\mathcal{I_P}} + \text{diag}(r_\mathcal{I_P})\beta) + \xi_p \\ \iff \xi = \xi_p + \hat{A}^{\perp}c_{\mathcal{I_P}} + \hat{A}^{\perp}\text{diag}(r_\mathcal{I_P})\beta
    \end{multline}
    where $\|\beta\|_\infty \le 1$.
\end{proof}

\subsection{Application to Reachability Analysis}
We now apply the coefficient-space over-approximation to the data-driven model set $\mathcal{N}_\Sigma$.

Let $D = \begin{bmatrix} X_- \\ U_- \end{bmatrix}$ where $X_-$ and $U_-$ are as defined in ~\ref{def:input-state_trajectories}. Then by vectorizing each term in ~\eqref{eq:const-eq} we get
\begin{multline}
    \begin{bmatrix}
        \mathrm{vec}(G^{(1)}_{\mathcal{N}_w} D^\perp) &
        \dots &
        \mathrm{vec}(G^{(\gamma_{\mathcal{Z}_w}T)}_{\mathcal{N}_w} D^\perp)
    \end{bmatrix} \xi \\ = \mathrm{vec}((X_+ - C_{\mathcal{N}_w})D^\perp) .
\end{multline}
By definition ~\ref{def:new-cmz}, $\hat{A}_{\mathcal{N}_w} =     \begin{bmatrix}
        \mathrm{vec}(G^{(1)}_{\mathcal{N}_w} D^\perp) &
        \dots &
        \mathrm{vec}(G^{(\gamma_{\mathcal{Z}_w}T)}_{\mathcal{N}_w} D^\perp)
\end{bmatrix}$,
and $\hat{b}_{\mathcal{N}_w} = \mathrm{vec}((X_+ - C_{\mathcal{N}_w})D^\perp)$. \\
From~\eqref{eq:N-sigma} and~\eqref{eq:RN-cmz}, it follows that the constraints of $\mathcal{N}_\Sigma$ coincide with those of $\mathcal{N}_w$.

\begin{algorithm}[h]
    \caption{Nullspace-Reachability}
    \label{alg:new-DDRA-algo}  
    \begin{algorithmic}[1]
      \Require $\mathcal{N}_\Sigma = \left\langle C_{\mathcal{N}_\Sigma}, \{G_{\mathcal{N}_\Sigma}\}_{i = 0}^{\gamma_{\mathcal{Z}_w}T}, \hat{A}_{\mathcal{N}_\Sigma}, \hat{b}_{\mathcal{N}_\Sigma} \right\rangle$, initial set $\mathcal{X}_{0}$, process noise zonotope $\mathcal{Z}_w$, input zonotope $\mathcal{U}_k$, $\forall k = 0, \dots,N-1$
    \Ensure reachable sets $\mathcal{R}_{k}, \forall k = 1, \dots,N$
        \State $Q = \begin{bmatrix} I_{\gamma_{\mathcal{N}_\Sigma}} \\ -I_{\gamma_{\mathcal{N}_\Sigma}} \end{bmatrix} \hat{A}_{\mathcal{N}_\Sigma}^{\perp}$
        \State $\xi_p = \hat{A}_{\mathcal{N}_\Sigma}^{\dagger} \hat{b}_{\mathcal{N}_\Sigma}$ 
        \State $s = 1_{2\gamma_{\mathcal{N}_\Sigma}} - \begin{bmatrix} I_{\gamma_{\mathcal{N}_\Sigma}} \\ -I_{\gamma_{\mathcal{N}_\Sigma}} \end{bmatrix} \xi_p$
        \State $\mathcal{P}' = \left\{ x \Biggm| Q x \leq s \right\}$ \Comment{Theorem~\ref{thm:lower-H-rep}}
        \State $\mathcal{I_P} = \left(\mathcal{\underline{I}_P}, \mathcal{\overline{I}_P} \right) = \mathrm{Interval}(\mathcal{P}')$
        \State $c_{\mathcal{I_P}} = \frac{\mathcal{I_P}_L + \mathcal{I_P}_U}{2}$
        \State $r_{\mathcal{I_P}} = \frac{\mathcal{I_P}_U - \mathcal{I_P}_L }{2}$
        \State $\mathcal{Z_\xi} = \left\langle c_\xi, G_\xi \right\rangle = \left\langle \hat{A}_{\mathcal{N}_\Sigma}^\perp c_{\mathcal{I_P}} + \xi_p, \ \hat{A}_{\mathcal{N}_\Sigma}^\perp \mathrm{diag}(r_{\mathcal{I_P}}) \right\rangle$  \Comment{Theorem~\ref{thm:over-appox-zono}}
        \State $C_{\mathcal{P}_\Sigma} = C_{\mathcal{N}_\Sigma} + \sum_{i=1}^{\gamma_{\mathcal{Z}_w}T} c_{\xi}^{(i)} G^{(i)}_{\mathcal{N}_\Sigma}$
        \State $G_{\mathcal{P}_\Sigma}^{(j)} = \sum_{i=1}^{\gamma_{\mathcal{Z}_w}T}{(G_\xi)_{(i, j)}G_{\mathcal{N}_\Sigma}^{(i)}}, \quad \forall j \in \{1, \dots, \gamma_\xi \}$  \Comment{Apply ~\eqref{eq:new-mz-gens}}
        \State $\mathcal{P}_\Sigma = \left\langle C_{\mathcal{P}_\Sigma}, \{G_{\mathcal{P}_\Sigma}^{(i)}\}_{i = 0}^{\gamma_\xi} \right\rangle$  \Comment{Apply ~\eqref{eq:new_mz}}
        \State $\hat{\mathcal{R}}_{0} =\mathcal{X}_{0}$
        \For{$k = 0:N-1$} 
            \State $\hat{\mathcal{R}}_{k+1} =\mathcal{P}_{\Sigma} (\hat{\mathcal{R}}_{k} \times \mathcal{U}_{k}  ) \oplus  \mathcal{Z}_w$  \Comment{Apply ~\eqref{eq:reach-propagation}}
        \EndFor
    \end{algorithmic}
\end{algorithm}

Figure~\ref{fig:coeff-space-plots} illustrates how well a zonotope over-approximates the feasible coefficient set $\Xi$. In our experiments, increasing the amount of data improves the results because the nullity of $\hat{A}_{\mathcal{N}_w}$ becomes smaller relative to the dimension of $\xi$, reducing the dimension of $\mathcal{I_P}$. The following results quantify this effect.

\begin{lemma}
    \label{const-mat-rank}
    $\mathrm{rank}(\hat{A}_{\mathcal{N}_w}) = \mathrm{rank}(D^{\perp})\cdot \mathrm{rank}(G_{\mathcal{Z}_{w}})$
    where $\hat{A}_{\mathcal{N}_w} =     \begin{bmatrix}
        \mathrm{vec}(G^{(1)}_{\mathcal{N}_w} D^\perp) &
        \dots &
        \mathrm{vec}(G^{(\gamma_{\mathcal{Z}_w}T)}_{\mathcal{N}_w} D^\perp)
    \end{bmatrix}$, and $D$ is the data matrix, and $G_{\mathcal{Z}_{w}}$ is the generator matrix of the noise zonotope $\mathcal{Z}_w$.
    \label{thm:kron-rank}    
\end{lemma}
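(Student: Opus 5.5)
Our plan is to show that $\hat{A}_{\mathcal{N}_w}$ is, up to a column permutation, a Kronecker product, and then use $\mathrm{rank}(X\otimes Y)=\mathrm{rank}(X)\,\mathrm{rank}(Y)$.

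\textbf{Step 1: compute each column.} Recall from \eqref{def:noise-matrix-zono} that
\[
G^{(j+(i-1)T)}_{\mathcal{N}_w}=g_{\mathcal{Z}_w}^{(i)}e_j^\top ,
\]
where $e_j\in\mathbb{R}^T$ is the $j$-th standard basis vector. Hence
\[
G^{(j+(i-1)T)}_{\mathcal{N}_w}D^\perp=g_{\mathcal{Z}_w}^{(i)}\bigl(e_j^\top D^\perp\bigr)=g_{\mathcal{Z}_w}^{(i)}\,D^\perp_{(j,:)},
\]
which is a rank-one outer product. Using $\mathrm{vec}(ab^\top)=b\otimes a$, the corresponding column of $\hat{A}_{\mathcal{N}_w}$ is
\[
\bigl((D^\perp)^\top e_j\bigr)\otimes g_{\mathcal{Z}_w}^{(i)} .
\]

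\textbf{Step 2: assemble the matrix.} Let the index $i$ run over $1,\dots,\gamma_{\mathcal{Z}_w}$ and $j$ over $1,\dots,T$. By the mixed-product property, the matrix whose columns are $\bigl((D^\perp)^\top e_j\bigr)\otimes g^{(i)}_{\mathcal{Z}_w}$, ordered with $j$ as the outer index, equals
\[
\bigl((D^\perp)^\top I_T\bigr)\otimes\bigl(G_{\mathcal{Z}_w}I_{\gamma_{\mathcal{Z}_w}}\bigr)=(D^\perp)^\top\otimes G_{\mathcal{Z}_w}.
\]
The paper instead orders the columns with $i$ as the outer index, via the label $j+(i-1)T$. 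So $\hat{A}_{\mathcal{N}_w}=\bigl((D^\perp)^\top\otimes G_{\mathcal{Z}_w}\bigr)P$, where $P$ is the commutation-type permutation matrix that swaps the two index orderings. Since $P$ is invertible, it does not change the rank.

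\textbf{Step 3: conclude.} We have
\[
\mathrm{rank}(\hat{A}_{\mathcal{N}_w})=\mathrm{rank}\bigl((D^\perp)^\top\otimes G_{\mathcal{Z}_w}\bigr)=\mathrm{rank}(D^\perp)\,\mathrm{rank}(G_{\mathcal{Z}_w}).
\]
The last equality follows from the singular values of a Kronecker product: they are all pairwise products $\sigma_a(X)\sigma_b(Y)$, so the number of nonzero ones is the product of the two counts. Rank is also invariant under transposition, which handles $(D^\perp)^\top$.

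\textbf{Main obstacle.} The only delicate point is the bookkeeping in Step 2. One has to state the vec convention precisely (column stacking, which gives $\mathrm{vec}(ab^\top)=b\otimes a$) and check that the paper's column order $j+(i-1)T$ really is a permutation of the Kronecker column order rather than some other rearrangement. We would verify this by writing out both index maps explicitly and confirming they define the same set of columns.
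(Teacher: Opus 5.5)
Your proposal is correct and follows essentially the same route as the paper: both identify $\hat{A}_{\mathcal{N}_w}$, up to a column reordering, with $(D^\perp)^\top\otimes G_{\mathcal{Z}_w}$, and then apply $\mathrm{rank}(X\otimes Y)=\mathrm{rank}(X)\,\mathrm{rank}(Y)$. Your use of $\mathrm{vec}(ab^\top)=b\otimes a$ and the mixed-product property simply makes the paper's column-reordering bookkeeping more explicit.
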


\begin{proof}
    From definition ~\ref{def:noise-matrix-zono}, the $k$-th column in $\hat{A}_{\mathcal{N}_w}$ is 
    \begin{align}
        \mathrm{vec}(G^{(k)}_{\mathcal{M}_w} D^\perp) &= \mathrm{vec}(
            \begin{bmatrix}
                0_{n \times (j-1)} & g_{\mathcal{Z}_w}^{(i)} & 0_{n \times (T-j)}
            \end{bmatrix} D^\perp
        ) \\
        &= \mathrm{vec}
        (\begin{bmatrix}
            D^{\perp}_{(j,1)}g_{\mathcal{Z}_{w}}^{(i)} & D^{\perp}_{(j,2)}g_{\mathcal{Z}_{w}}^{(i)} &
            \dots &
        \end{bmatrix}) \\ 
        &= \mathrm{vec}
        (\begin{bmatrix}
            D^{\perp^{\top}}_{(1,j)}g_{\mathcal{Z}_{w}}^{(i)} & D^{\perp^{\top}}_{(2,j)}g_{\mathcal{Z}_{w}}^{(i)} &
            \dots &
        \end{bmatrix})
        \label{eq:kron-cols}
    \end{align}
    where $k = j + (i-1)T$ such that $j$ is the remainder after dividing $k$ by $T$.

    The columns of $\hat{A}_{\mathcal{N}_w}$ as in ~\eqref{eq:kron-cols} can be re-ordered to give a matrix $\hat{A}_{\mathcal{N}_w}'$ such that:
    \begin{equation}
        \hat{A}_{\mathcal{N}_{w}(r:r+n, \ :)}' =
        \begin{bmatrix}
            D^{\perp^{\top}}_{(r,1)}g_{\mathcal{Z}_{w}}^{(1)} &
            \cdots &
            D^{\perp^{\top}}_{(r,1)}g_{\mathcal{Z}_{w}}^{(\gamma_{\mathcal{Z}_w})} &
            D^{\perp^{\top}}_{(r,2)}g_{\mathcal{Z}_{w}}^{(1)} & \cdots 
        \end{bmatrix}
        \label{eq:col-reorder}
    \end{equation}
    $\forall r \in \{1, \dots,\mathrm{nullity}(D)\}$. By definition $G_{\mathcal{Z}_w} = \begin{bmatrix} g_{\mathcal{Z}_w}^{(1)} \cdots g_{\mathcal{Z}_w}^{(\gamma_{\mathcal{Z}_w})} \end{bmatrix}$, therefore \eqref{eq:col-reorder} is equivalent to:
    \begin{equation}
        \hat{A}_{\mathcal{N}_{w}(r:r+n, \ :)}' =
        \begin{bmatrix}
            D^{\perp^{\top}}_{(r,1)} G_{\mathcal{Z}_w} & D^{\perp^{\top}}_{(r,2)} G_{\mathcal{Z}_w} &
            \cdots &
            D^{\perp^{\top}}_{(r,T)} G_{\mathcal{Z}_w}
        \end{bmatrix}
    \end{equation}
    and this is by definition equivalent to $\hat{A}_{\mathcal{N}_{w}}' = D^{\perp^{\top}} \otimes G_{\mathcal{Z}_w}$. Therefore ~\cite{matrix-cookbook} :
    $$\mathrm{rank}(\hat{A}_{\mathcal{N}_w}') = \mathrm{rank}(D^{\perp^{\top}}) \cdot \mathrm{rank}(G_{\mathcal{Z}_w}) = \mathrm{rank}(D^{\perp}) \cdot \mathrm{rank}(G_{\mathcal{Z}_w})$$
    Since changing the order of columns does not change the rank of a matrix, therefore: $$\mathrm{rank}(\hat{A}_{\mathcal{N}_w}) = \mathrm{rank}(\hat{A}_{\mathcal{N}_w}') = \mathrm{rank}(D^{\perp}) \cdot \mathrm{rank}(G_{\mathcal{Z}_w}) .$$
\end{proof}

Using Lemma~\ref{thm:kron-rank}, we can establish a relation between the nullity of $\hat{A}_{\mathcal{N}_w}$, the rank of $D$, and the rank of $G_{\mathcal{Z}_w}$.

\begin{corollary}
    $\mathrm{rank}(\hat{A}_{\mathcal{N}_w}) = \mathrm{nullity}(D) \cdot \mathrm{rank}(G_{\mathcal{Z}_w})$, where $\hat{A}_{\mathcal{N}_w} =     \begin{bmatrix}
        \mathrm{vec}(G^{(1)}_{\mathcal{N}_w} D^\perp) &
        \dots &
        \mathrm{vec}(G^{(\gamma_{\mathcal{Z}_w}T)}_{\mathcal{N}_w} D^\perp)
    \end{bmatrix}$, and $D$ is the data matrix and $G_{\mathcal{Z}_w}$ is the generator matrix of $\mathcal{Z}_w$.
\end{corollary}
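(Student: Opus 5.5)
The corollary follows from Lemma~\ref{thm:kron-rank} once we show that $\mathrm{rank}(D^{\perp}) = \mathrm{nullity}(D)$. By the lemma, $\mathrm{rank}(\hat{A}_{\mathcal{N}_w}) = \mathrm{rank}(D^{\perp})\cdot\mathrm{rank}(G_{\mathcal{Z}_w})$, so this single substitution proves the statement.

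The identification rests on the notation section, where $D^{\perp}$ is the nullspace basis of $D$. Its columns are therefore a basis of $\mathcal{N}(D)\subseteq\mathbb{R}^{T}$. There are exactly $\mathrm{nullity}(D)$ of them, and they are linearly independent, so $D^{\perp}\in\mathbb{R}^{T\times \mathrm{nullity}(D)}$ has full column rank. Hence $\mathrm{rank}(D^{\perp}) = \mathrm{nullity}(D)$.

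If one also wants an expression in terms of the data, rank--nullity for $D\in\mathbb{R}^{(n+m)\times T}$ gives $\mathrm{nullity}(D) = T-\mathrm{rank}(D)$. This in turn gives $\mathrm{rank}(\hat{A}_{\mathcal{N}_w}) = (T-\mathrm{rank}(D))\cdot\mathrm{rank}(G_{\mathcal{Z}_w})$.

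The only point needing care is the convention for $D^{\perp}$. If it were just any matrix whose columns span $\mathcal{N}(D)$, possibly with redundant columns, the rank would still equal $\dim\mathcal{N}(D)$, since rank is the dimension of the column span. So the identity holds under either reading, and the argument is not really obstructed.

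A secondary check is that the Kronecker rank identity $\mathrm{rank}(A\otimes B)=\mathrm{rank}(A)\,\mathrm{rank}(B)$ used in the lemma applies to $D^{\perp\top}$ as well. This is immediate because $\mathrm{rank}(D^{\perp\top})=\mathrm{rank}(D^{\perp})$.
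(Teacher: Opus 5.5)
Your proposal is correct and follows the same route as the paper: the corollary comes straight from Lemma~\ref{thm:kron-rank} once you note that $D^{\perp}$, as a nullspace basis of $D$, has rank $\mathrm{nullity}(D)$. Your extra observations — that rank--nullity gives $T-\mathrm{rank}(D)$, and that the result holds whatever convention is used for $D^{\perp}$ — match how the paper uses the corollary in the subsequent nullity theorem.
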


\begin{theorem}
    $\mathrm{nullity}(\hat{A}_{\mathcal{N}_w}) = \gamma_{\mathcal{Z}_w}T - (T - \mathrm{rank}(D)).\mathrm{rank}(G_{\mathcal{Z}_w})$, where $\hat{A}_{\mathcal{N}_w} =     \begin{bmatrix}
        \mathrm{vec}(G^{(1)}_{\mathcal{N}_w} D^\perp) &
        \dots &
        \mathrm{vec}(G^{(\gamma_{\mathcal{Z}_w}T)}_{\mathcal{N}_w} D^\perp)
    \end{bmatrix}$, and $D$ is the data matrix and $G_{\mathcal{Z}_w}$ is the generator matrix of $\mathcal{Z}_w$.
\end{theorem}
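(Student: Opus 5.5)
This follows from rank--nullity applied to $\hat{A}_{\mathcal{N}_w}$, combined with Lemma~\ref{thm:kron-rank} and the preceding corollary. Since $\hat{A}_{\mathcal{N}_w}$ has one column per generator of $\mathcal{N}_w$, it has $\gamma_{\mathcal{N}_w}=\gamma_{\mathcal{Z}_w}T$ columns. Rank--nullity therefore gives
\[
\mathrm{nullity}(\hat{A}_{\mathcal{N}_w}) = \gamma_{\mathcal{Z}_w}T - \mathrm{rank}(\hat{A}_{\mathcal{N}_w}).
\]
It remains to express $\mathrm{rank}(\hat{A}_{\mathcal{N}_w})$ in terms of $D$ and $G_{\mathcal{Z}_w}$.

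By Lemma~\ref{thm:kron-rank}, $\mathrm{rank}(\hat{A}_{\mathcal{N}_w})=\mathrm{rank}(D^\perp)\cdot\mathrm{rank}(G_{\mathcal{Z}_w})$. I would then compute $\mathrm{rank}(D^\perp)$. The matrix $D=[X_-;U_-]$ has $T$ columns, and $D^\perp$ is a basis matrix for $\mathcal{N}(D)\subseteq\mathbb{R}^T$. Its columns are linearly independent, so $\mathrm{rank}(D^\perp)=\mathrm{nullity}(D)$; this is precisely the content of the corollary. Applying rank--nullity to $D$ gives $\mathrm{nullity}(D)=T-\mathrm{rank}(D)$. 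Substituting back yields
\[
\mathrm{nullity}(\hat{A}_{\mathcal{N}_w})=\gamma_{\mathcal{Z}_w}T-(T-\mathrm{rank}(D))\,\mathrm{rank}(G_{\mathcal{Z}_w}).
\]

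There is no genuine obstacle here. The one step that needs care is confirming the column count of $\hat{A}_{\mathcal{N}_w}$, namely that the indexing $k=j+(i-1)T$ in \eqref{def:noise-matrix-zono} produces exactly $\gamma_{\mathcal{Z}_w}T$ columns. I would also state explicitly that $D^\perp$ is taken as a basis matrix with full column rank, not merely a matrix whose columns span the nullspace; otherwise the identity $\mathrm{rank}(D^\perp)=\mathrm{nullity}(D)$ could fail. Finally, I would note the edge case where $D$ has full column rank: then $D^\perp$ is empty, $\hat{A}_{\mathcal{N}_w}$ has rank zero, and the formula correctly returns $\gamma_{\mathcal{Z}_w}T$.
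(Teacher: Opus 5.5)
Your proposal is correct and follows the paper's route. You apply rank--nullity to $\hat{A}_{\mathcal{N}_w}$, which has $\gamma_{\mathcal{Z}_w}T$ columns, then use the corollary $\mathrm{rank}(\hat{A}_{\mathcal{N}_w})=\mathrm{nullity}(D)\,\mathrm{rank}(G_{\mathcal{Z}_w})$ together with $\mathrm{nullity}(D)=T-\mathrm{rank}(D)$; you just spell out what the paper's one-line proof leaves implicit. One small correction to your caveat: $D^\perp$ does not need full column rank, because any matrix whose columns merely span $\mathcal{N}(D)$ already has rank $\mathrm{nullity}(D)$, and the Kronecker rank identity holds for arbitrary factors.
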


\begin{proof}
    This directly follows from the rank-nullity theorem knowing that the number of columns of $\hat{A}_{\mathcal{N}_w}$ is $\gamma_{\mathcal{Z}_w}T$.
\end{proof}

\begin{remark}
    If $D$ has full row rank, then
    \begin{equation}
        \mathrm{nullity}(\hat{A}_{\mathcal{N}_w}) = \gamma_{\mathcal{Z}_w}T - (T - n - m).\mathrm{rank}(G_{\mathcal{Z}_w})
        \label{eq:nulllity-full-rank} .
    \end{equation}
\end{remark}

\begin{remark}
    If $\mathcal{Z}_w$ has only one generator (i.e. $G_{\mathcal{Z}_w}$ is a single column), then $\gamma_{\mathcal{Z}_w} = 1$ and $\mathrm{rank}(G_{\mathcal{Z}_w}) = 1$. This simplifies ~\eqref{eq:nulllity-full-rank} to be
    \begin{equation}
        \mathrm{nullity}(\hat{A}_{\mathcal{N}_w}) = T - (T - n - m) = n + m .
    \end{equation}
\end{remark}

An additional reason why increasing $T$ improves results is that the number of generators of $\mathcal{P}_\Sigma$ becomes small relative to the number of generators of $\mathcal{N}_\Sigma$. From~\eqref{eq:new-mz-gens} and Theorem~\ref{thm:over-appox-zono}, the number of generators of $\mathcal{P}_\Sigma$ equals $\mathrm{nullity}(\hat{A}_{\mathcal{P}_\Sigma})$, whereas the number of generators of $\mathcal{N}_\Sigma$ is $\gamma_{\mathcal{Z}_w}T$.

\begin{remark}[Extension to piecewise affine systems]
Although the above analysis focuses on LTI dynamics, the nullspace matrix zonotope construction and the resulting reachability procedure extend naturally to hybrid settings. In particular, the same NMZ method can be embedded in a piecewise affine (PWA) framework, as discussed in~\cite{xie2025data}. We illustrate this extension with an example in Section~\ref{sec:exp}.
\end{remark}


\subsection{Perturbation Bound for the Nullspace Matrix Zonotope}
Having constructed the NMZ, we now derive an explicit Cai--Zhang bound tailored to its structure.
Since~\eqref{eq:new_mz} is a standard matrix zonotope, the worst-case bound framework from~\eqref{eq:vertex} applies directly, adapted to the new center $C_\mathcal{M}$ and the new generators $G^{(j)}_\mathcal{M}$.

Let $U_\mathcal{M}$ and $V_{\mathcal{M},\perp}$ be the left singular basis and the orthogonal complement of the right singular basis of the new center $C_\mathcal{M}$, respectively. The perturbation $E(\boldsymbol{\eta})$ is now $E(\boldsymbol{\eta}) = \sum_{j=1}^{\gamma} \eta^{(j)} G^{(j)}_\mathcal{M}$.

We define the new component norms based on the $\gamma$ new generators:
\begin{align}
\gamma_{\mathcal{M}}^{(j)} &= \|G^{(j)}_\mathcal{M}\|_2 = \left\| \sum_{i=1}^p (G_\xi)_{(i,j)} G^{(i)}_{\mathcal{N}} \right\|_2 , \\
\mu_{\mathcal{M}}^{(j)} &= \|U_\mathcal{M}^{\top} G^{(j)}_\mathcal{M} V_{\mathcal{M},\perp}\|_2 = \left\| U_\mathcal{M}^{\top} \left( \sum_{i=1}^p (G_\xi)_{(i,j)} G^{(i)}_{\mathcal{N}} \right) V_{\mathcal{M},\perp} \right\|_2 .
\end{align}
The worst-case bound for the subspace perturbation of $\mathcal{M}$ is then given by:
\begin{equation}
\label{eq:vertex-M}
\max_{\boldsymbol{\eta}} \|\sin\Theta\| \leq \frac{\sum_{j=1}^{\gamma} \mu_{\mathcal{M}}^{(j)}}{\sigma_{\min}(C_\mathcal{M}) - \sum_{j=1}^{\gamma} \gamma_{\mathcal{M}}^{(j)}} .
\end{equation}
This bound is valid when $\sum_{j=1}^{\gamma} \gamma_{\mathcal{M}}^{(j)} < \sigma_{\min}(C_\mathcal{M})$.

\paragraph*{Bound under Data-Driven Structure}
If the original generators $G^{(i)}_{\mathcal{N}}$ possess the shared structure $G^{(i)}_{\mathcal{N}} = G_{\mathcal{M}_w}^{(i)}H$ as in \eqref{eq:zonotope-structure}, the new generators $G^{(j)}_\mathcal{M}$ inherit this structure:
$$
G^{(j)}_\mathcal{M} = \sum_{i=1}^p (G_\xi)_{(i,j)} G^{(i)}_{\mathcal{N}} = \left( \sum_{i=1}^p (G_\xi)_{(i,j)} G_{\mathcal{M}_w}^{(i)} \right) H \triangleq \tilde{G}^{(j)}_{\mathcal{M}_w} H .
$$
By applying the factorization logic from \eqref{eq:global}, we derive a refined bound based on the basis of $C_\mathcal{M}$:
\begin{align}
\label{eq:global-M}
\|\sin\Theta\| &\leq \frac{\kappa_\mathcal{M} \sum_{j=1}^{\gamma} \mu_{w,\mathcal{M}}^{(j)}}{\sigma_{\min}(C_\mathcal{M}) - \sum_{j=1}^{\gamma} \gamma_{\mathcal{M}}^{(j)}},\\
\text{where } \kappa_\mathcal{M} &= \|H V_{\mathcal{M},\perp}\|_2, \quad \mu_{w,\mathcal{M}}^{(j)} = \|U_\mathcal{M}^{\top} \tilde{G}^{(j)}_{\mathcal{M}_w}\|_2.
\end{align}

Compared with the CMZ bound \eqref{eq:cmz-worst-bounds}, \eqref{eq:global-M} accumulates only over the $\gamma$ \emph{effective} generators of the NMZ, rather than over all $\gamma_{\mathcal{N}}$ original generators. When the equality constraints compress the feasible coefficient set into a low-dimensional subspace, it is often possible to choose $\mathcal{Z}_\xi$ so that $\gamma\ll \gamma_{\mathcal{N}}$, thereby reducing both the numerator and the denominator accumulations and yielding a tighter bound. 

\begin{figure}[t]
  \centering
  \includegraphics[width=0.9\linewidth,page=1]{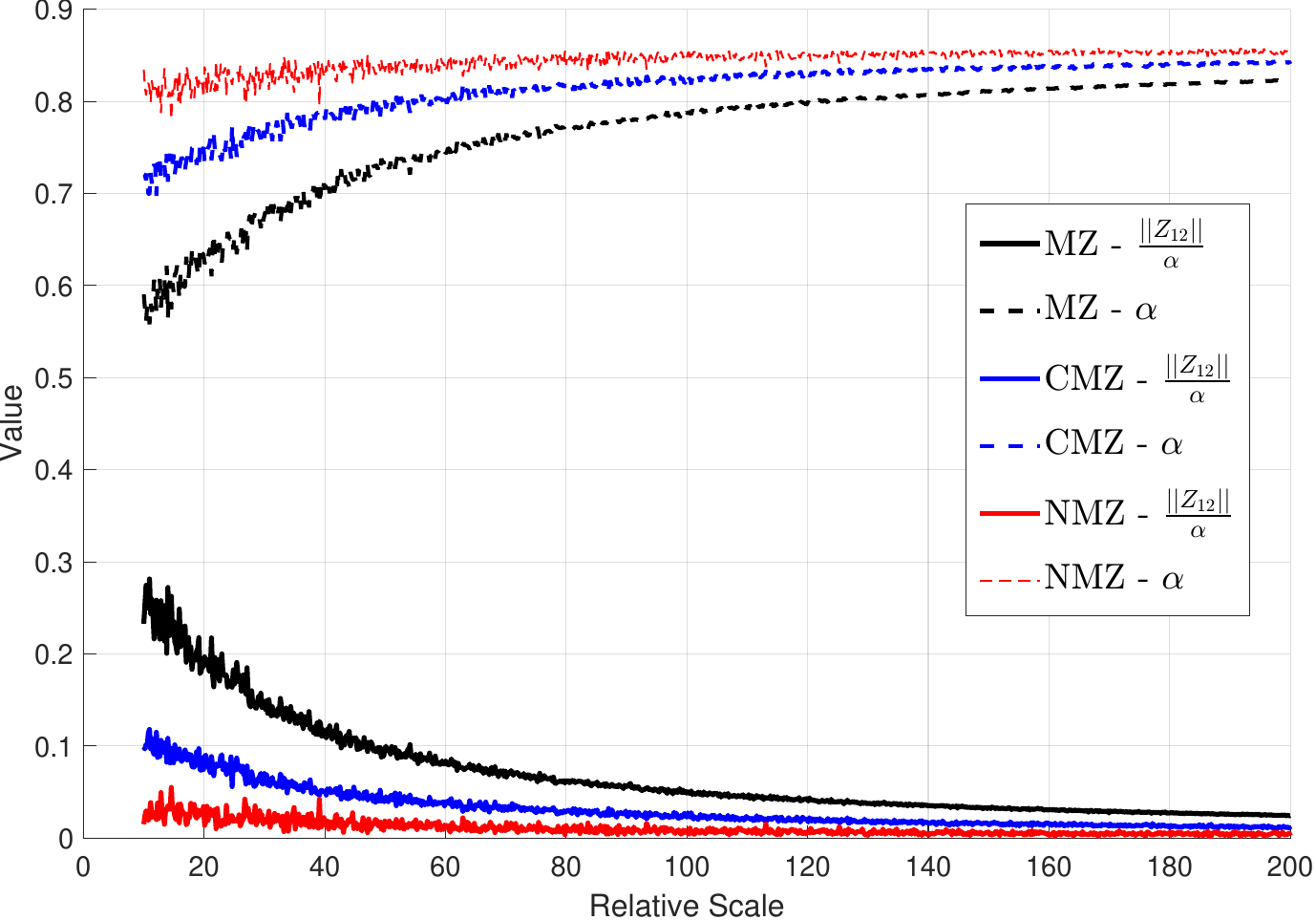}
  \caption{\textbf{CMZ, MZ, and NMZ under state-input data scaling.} Across scales, NMZ consistently attains the smallest ratio, reflecting the benefit of fewer effective generators.}
  \label{fig:three-methods}
\end{figure}
Figure~\ref{fig:three-methods} visualizes this effect in a controlled scaling experiment with fixed noise. The NMZ—by aggregating only over the $\gamma$ effective generators achieves a tighter curve than CMZ (and the plain MZ).

\section{Experimental Setup and Results}
\label{sec:exp}
We evaluate the proposed NMZ on a linear system and a piecewise affine hybrid system, comparing it against MZ- and CMZ-based propagation.
All experiments are implemented in MATLAB with the CORA toolbox~\cite{althoff2015cora} and executed on a Lenovo laptop equipped with an Intel 64-bit processor (1.5\,GHz) and 32\,GB of RAM, using a single CPU core without GPU acceleration.

\begin{figure*}[!t]
    \centering
    \includegraphics[width=0.95\linewidth]{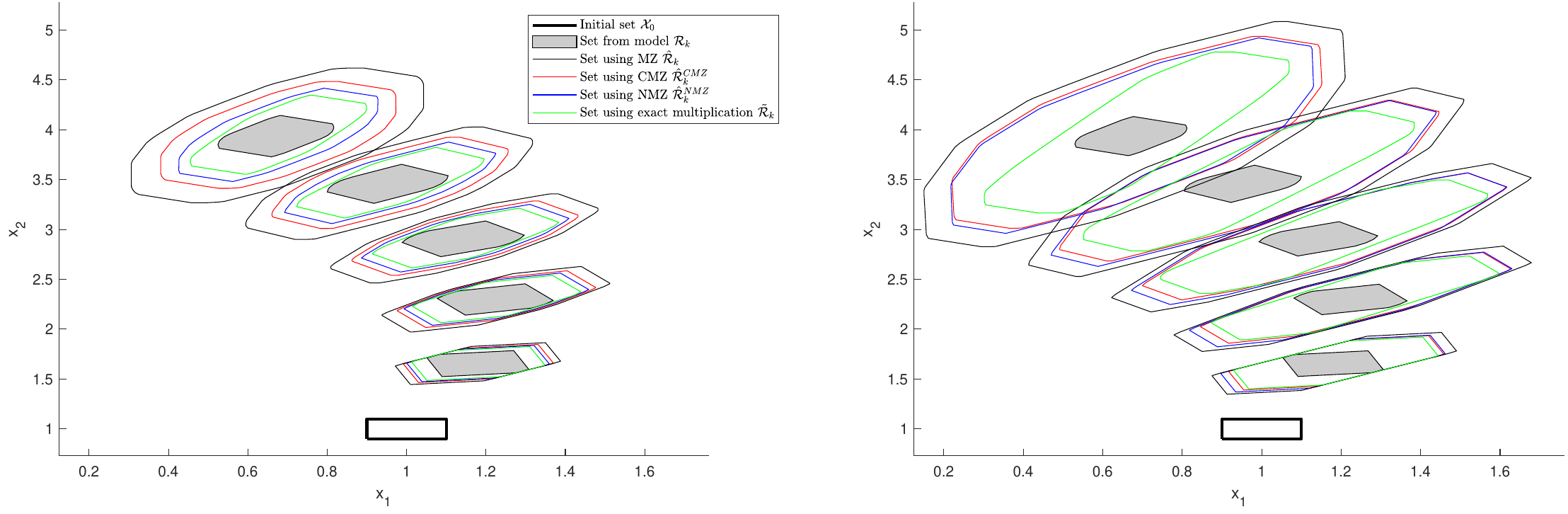}
    \caption{Comparison of reachable-set projections using matrix zonotope, constrained matrix zonotope, the nullspace matrix zonotope and exact multiplication. The left subfigure shows results with $T=50$ initial trajectories, while the right subfigure shows results with $T=30$ initial trajectories.}
    \label{fig:reachable-sets-comparison}
\end{figure*}

\begin{figure}[t]
  \centering
\includegraphics[width=\linewidth,page=1]{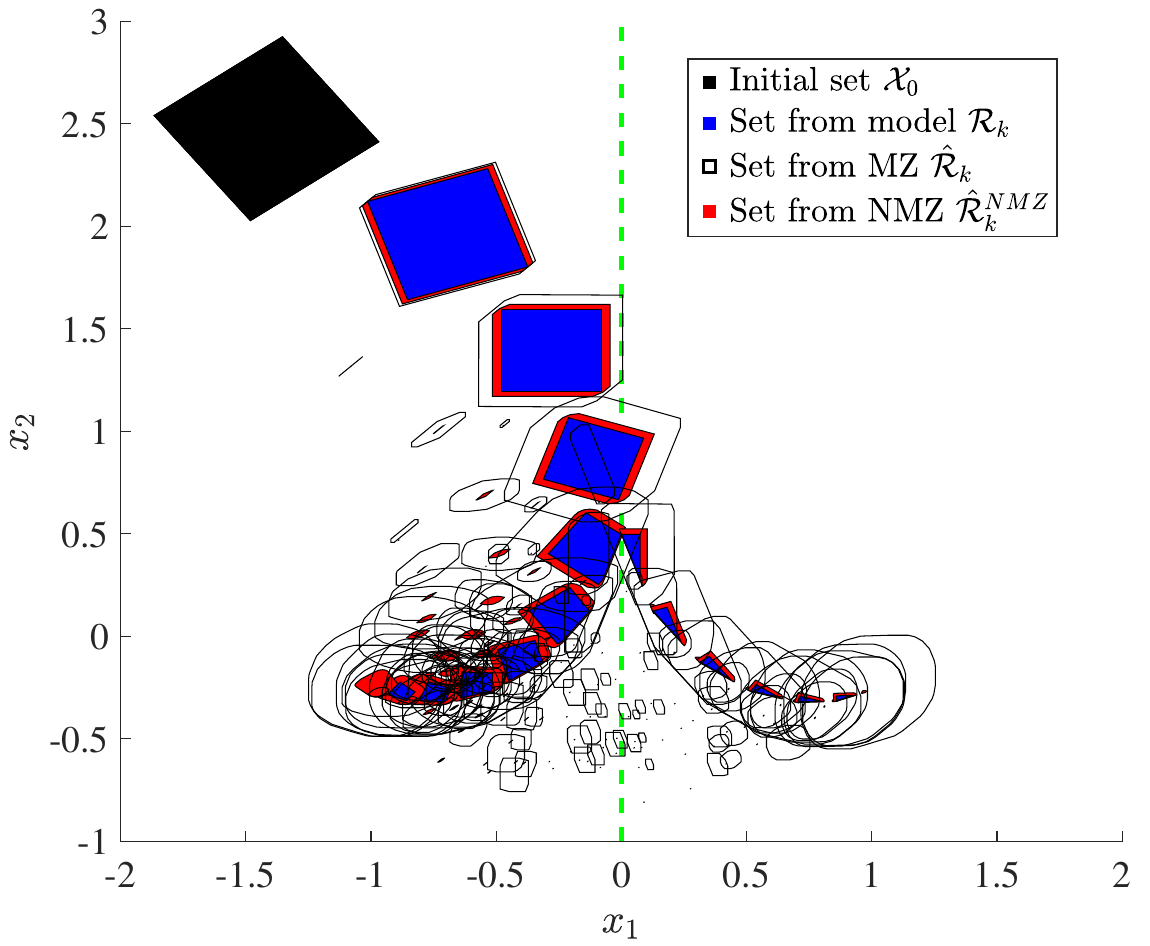}
\caption{\textbf{Piecewise affine system (switching at $x_1=0$).} Initial set $\mathcal{X}_0$ (black); model-based set $\mathcal{R}_k$ (blue); MZ-based set $\hat{\mathcal{R}}_k$ (black solid line); NMZ-based set $\hat{\mathcal{R}}^{\mathrm{NMZ}}_k$ (red). The NMZ yields a tighter outer approximation.}
  \label{fig:hybrid-nmz-mz}
\end{figure}

Consider a five-dimensional discrete-time linear system with initial state set $\mathcal{X}_0 = \langle \mathbf{1}_5, 0.1 I_5 \rangle$, a one-dimensional control input bounded by a zonotope
$\mathcal{U} = \langle 10, 0.25 \rangle$, and process noise $\mathcal{W} = \langle \mathbf{0}_5, \mathrm{diag}([1, 1.1, 1.3, 1, 1.5]) \rangle$. The system matrices are:

\begin{equation*}
A = \begin{bmatrix}
0.9323 & -0.189 & 0 & 0 & 0 \\
0.189 & 0.9323 & 0 & 0 & 0 \\
0 & 0 & 0.8596 & 0.04302 & 0 \\
0 & 0 & -0.04302 & 0.8596 & 0 \\
0 & 0 & 0 & 0 & 0.9048
\end{bmatrix}
\end{equation*}
\begin{equation*}
B^T = \begin{bmatrix}
0.04363 ,0.05327, 0.04754, 0.04528,0.04758
\end{bmatrix}
\end{equation*}

By varying the number of input--state trajectories in \eqref{def:input-state_trajectories}, we control the number of generators of the MZ and CMZ; the proposed NMZ is constructed from the same data via the
nullspace-based coefficient over-approximation described in Section~\ref{sec:nmz}.
We then compute reachable sets using three propagation modes:
(i) MZ-based propagation \(\hat{\mathcal{R}}_k\),
(ii) CMZ-based propagation \(\hat{\mathcal{R}}^{\mathrm{CMZ}}_k\), and
(iii) NMZ-based propagation \(\hat{\mathcal{R}}^{\mathrm{NMZ}}_k\).
As a reference, we include an exact set $\tilde{\mathcal{R}}_k$ obtained by
\(\text{CMZ}\times\) constrained polynomial zonotope (CPZ)~\cite{kochdumper2023constrained}, which yields a generally nonconvex exact
representation~\cite{zhang2025data}; see Figure~\ref{fig:reachable-sets-comparison}.

When the number of generators of the MZ and CMZ is small, we propagate with a generous reachable-set reduction
order~\cite{girard2005reachability,rego2025novel} of \(4000\). In the right subplot of Figure~\ref{fig:reachable-sets-comparison}, the reachable
set of the NMZ is slightly larger than that of the CMZ, as expected, because the NMZ over-approximates the CMZ. For step \(k{=}5\) (Table~\ref{tab:runtime}), the NMZ is \(\mathbf{7.17\times 10^4}\) times faster
than the CMZ. \emph{Crucially}, despite sharing the same unconstrained zonotope structure---and thus very
similar runtime---with the MZ (the MZ is only \(\approx\!1.9\times\) slower), the NMZ produces a \emph{much smaller} reachable set \(\hat{\mathcal{R}}^{\mathrm{NMZ}}_k\) than \(\hat{\mathcal{R}}_k\) for
comparable generator budgets. This represents a key advantage of the NMZ over the plain MZ.

As the number of generators of the MZ and CMZ grows, unreduced propagation becomes impractical. Under a uniform
reachable-set generator cap of \(1000\) (left subplot), the NMZ is \(\mathbf{4.16\times 10^4}\) times faster
than the CMZ, while remaining tighter than the CMZ at the same reduction order and noticeably smaller than the MZ.
The efficiency gap follows from the underlying structure: multiplication of the MZ or NMZ by a zonotope yields an unconstrained zonotope
with no additional constraints, whereas multiplication of the CMZ by a CPZ must carry equality constraints throughout, which
incurs substantial overhead. Overall, the NMZ preserves the simplicity and speed of the MZ
while delivering significantly less conservative reachable sets for a given generator budget.

We additionally evaluate the methods on a hybrid system with a switching surface at $x_1=0$ under strong noise~\cite{xie2025data}. As shown in Figure~\ref{fig:hybrid-nmz-mz}, the sets $\hat{\mathcal{R}}^{\mathrm{NMZ}}_k$ (red) produced by the NMZ yield a visibly tighter outer approximation than the sets $\hat{\mathcal{R}}_k$ (black solid line) produced by the MZ, while remaining close to the model-based set $\mathcal{R}_k$ (blue). Hybrid systems impose stricter demands on the tightness of the model set: errors near the switching boundary (at $x_1 = 0$ in Figure~\ref{fig:hybrid-nmz-mz}) can lead to exponentially accumulated propagation errors, which severely degrade the subsequent reachable-set computation~\cite{xie2025data}. As shown in Figure~\ref{fig:hybrid-nmz-mz}, although the true system is convergent, the reachable set computed with the MZ-based method becomes divergent, whereas the reachable set obtained with the NMZ-based method maintains convergent behavior.

\begin{table}[t]
\centering
\caption{Runtime (seconds) to compute \(\hat{\mathcal{R}}^{(\cdot)}_5\).}
\label{tab:runtime}
\begin{tabular}{lcccccc}
\hline
Scenario & order & MZ & CMZ & NMZ & Exact \\
\hline

 Figure~\ref{fig:reachable-sets-comparison} right & 4000 & 0.0065 & 145.73 & 0.0035 & 5.281  \\
 Figure~\ref{fig:reachable-sets-comparison} left  & 1000 & 0.0091 & 344.36 & 0.0048 & 95.571  \\
\hline
\end{tabular}
\end{table}

\section{Conclusion}\label{sec:conclus}

This paper introduced a matrix-perturbation viewpoint for data-driven reachability analysis. Leveraging spectral
perturbation theory, in particular the Cai--Zhang bounds, we quantified how noise-induced subspace rotations and spectral distortions propagate through the generators of a matrix zonotope. To alleviate the computational burden of the CMZ, we developed a CMZ-to-MZ
approximation pipeline that preserves set-containment guarantees while enabling efficient computation with markedly reduced conservatism relative to the plain MZ. We further proposed the \emph{nullspace matrix zonotope} (NMZ), which retains the unconstrained zonotope structure of the MZ yet delivers substantially tighter outer approximations than the CMZ when the number of generators of the reachable set is limited.

In future work, we plan to (i) exploit perturbation bounds to design \emph{maximal inner approximations} for data-driven reachability; (ii) extend the coefficient-space framework to hybrid systems, for example through a \emph{hybrid matrix zonotope} for reachability analysis of hybrid systems; and (iii) deploy the NMZ in practical settings, particularly real-time and time-varying systems, where its computational efficiency and reduced conservatism can improve data-driven verification and control.




\bibliographystyle{ACM-Reference-Format}
\bibliography{BibTex_2025}

\end{document}